\newcommand{\cref}[1]{Chapter~\ref{#1}}
\newcommand{\fref}[1]{Fig.~\ref{#1}}
\newcommand{\tref}[1]{Table~\ref{#1}}
\newcommand{\first}{\emph{(i)}~}
\newcommand{\second}{\emph{(ii)}~}
\newcommand{\third}{\emph{(iii)}~}
\newcommand{\fourth}{\emph{(iv)}~}
\newcommand{\ie}{i.e., \@}
\newcommand{\eg}{e.g., \@}
\newcommand{\cf}{cf. \@}
\newcommand{\etal}{et~al.\xspace}
\newcommand{\perc}{\,\%\xspace}
\definecolor{darkgreen}{rgb}{0,0.5,0}
\definecolor{brown}{rgb}{0.7,0.3,0}
\definecolor{darkblue}{rgb}{0,0,0.5}
\newcounter{fn1}
\newcounter{fn2}
\newcounter{fn3}
\newcounter{fn4}
\newcounter{fn5}
\newtheorem{claim}{Claim}
\newtheorem{definition}{Definition}
\newcommand{\lref}[1]{Claim~\ref{#1}}
\let\underscore\_
\newcommand{\myunderscore}{\renewcommand{\_}{\underscore\hspace{0pt}}}
\title{Policy-Compliant Path Diversity and Bisection Bandwidth}
\author{\IEEEauthorblockN{Rowan Kl\"oti\IEEEauthorrefmark{1},
Vasileios Kotronis\IEEEauthorrefmark{1},
Bernhard Ager\IEEEauthorrefmark{1},
Xenofontas Dimitropoulos\IEEEauthorrefmark{2}\IEEEauthorrefmark{1}}
\IEEEauthorblockA{\IEEEauthorrefmark{1}ETH Zurich, Switzerland\\
Email: \{rkloeti,~vkotroni,~bernhard.ager\}@tik.ee.ethz.ch\\
\IEEEauthorrefmark{2}University of Crete / FORTH, Greece\\
Email: fontas@ics.forth.gr}}
\begin{document}

\maketitle

\begin{abstract}

How many links can be cut before a network is bisected? 
What is the maximal bandwidth that can be pushed between
two nodes of a network? These questions are closely related
to network resilience, path choice for multipath routing or 
bisection bandwidth estimations in data centers. The answer 
is quantified using metrics such as the number 
of edge-disjoint paths between two network nodes and the
cumulative bandwidth that can flow over these paths. In practice though, such
calculations are far from simple due to the restrictive effect of network policies
on path selection. Policies are set by network administrators to conform to service level agreements,
protect valuable resources or optimize network performance.
In this work, we introduce a general methodology for estimating lower and upper bounds for the
\emph{policy-compliant} path diversity and bisection bandwidth between two nodes of a network, 
effectively quantifying the effect of policies on these metrics. Exact values can be obtained 
if certain conditions hold. The approach is based on regular languages and can be applied in a 
variety of use cases.

\end{abstract}

\section{Introduction}\label{sec:introduction}

Resilience is a desirable property for many networked systems and is often achieved through redundancy:
when multiple paths exist between nodes, it is possible to route around a failed link. This resilience may be 
quantified in a graph-theoretic sense as the (edge-wise) \emph{path diversity}:
\begin{definition}\label{def:path-diversity}
The \emph{path diversity} between two vertices in a graph is the number of edge-disjoint paths connecting them.
\end{definition}
The application of Menger's theorem~\cite{arlinghaus2001graph} 
for edges allows the path diversity between two vertices to be calculated as the minimum cut,
using an algorithm such as Ford-Fulkerson, with each edge having a unitary capacity.
By adding non-unitary edge capacities, we can also calculate the \emph{bisection bandwidth}.
This metric is useful \eg for data center routing to optimize performance and robustness,
or for quantifying the  advantages of multipath routing protocols~\cite{Xu:2006:MMI:1159913.1159934} in terms of
achievable throughput.
\begin{definition}\label{def:bisection-bandwidth}
The \emph{bisection bandwidth} between two vertices in a network is the maximum achievable flow between them.
\end{definition}

In practice, networks do not permit all possible paths due to management policies.
These can be the outcome of routing optimization techniques, security considerations or financial 
agreements~\cite{caesar2005bgp}. For example, inter-domain paths in the Internet resemble a
\emph{valley-free} policy model~\cite{gao2001stable}, a simplified
model of the business relationships between Autonomous Systems (AS).
On the other hand, such policies substantially restrict which paths are permissible and 
constrain the effective path diversity. Thus, a rich graph may not be fully 
utilized due to a restrictive policy (or set of policies) imposed over its paths.
Therefore, calculating the \emph{policy-compliant} path diversity 
and bisection bandwidth is desirable 
to answer questions such as: \first how many \emph{valid} edge-disjoint paths can exist between two nodes, or
\second how much bandwidth can be utilized between these two nodes before the network is overloaded, subject 
to network-wide policies. The goal is to understand the effect of network policy on network resiliency,
availability and achievable throughput.

In this paper, we introduce a method for estimating the path diversity and
bisection bandwidth of a network subject to policy constraints on the paths.
We model the network topology as a directed graph with policy labels on the edges.
We model network policies as a regular expression over these labels and require that all 
valid paths in the graph adhere to this regular expression.
Every regular language can be described by an automaton, specifically a Non-deterministic
Finite state Automaton (NFA)~\cite{bruggemann1993regular}. Using NFAs and the original
graph, we develop a transformed graph that constrains paths to those accepted by the regular
language. While path diversity calculations on the original graph under policies are hard,
they become simple using general graph algorithms~\cite{ford1962flows} on the transformed graph.
For instance, classic graph algorithms can work on the transformed graph to find the 
policy-compliant max-flow/min-cut between two nodes as well as the paths achieving this flow.

We will show how the transformed graph can be utilized to obtain both upper and lower bounds
on the path diversity or bisection capacity of the original graph. If the NFA fulfills
certain criteria, the bounds are equal and therefore exactly the same as the actual value,
\ie the path diversity or bisection bandwidth are invariant under the transformation. Otherwise, 
we obtain upper and lower approximations that encapsulate the actual min-cut within their boundaries.
The tightness of the boundaries depends on the complexity of the state transitions of the NFA,
as we will explain later.
We also show how constraints on traversed nodes may be
imposed, including scenarios where both the nodes and edges are subject to separate
constraints.

The rest of the paper is structured as follows:
Section~\ref{sec:use cases} presents interesting use cases where
path diversity and bisection bandwidth metrics under policy compliance
are required. Section~\ref{sec:idea} describes the basic ideas and the
graph transform process; Section~\ref{sec:graph_transform} substantiates
this process using formal mathematical formulation and proving the needed
claims. Section~\ref{sec:eval} presents some results of our algorithm applied
on Internet AS-level topologies, demonstrating our 
approach\footnote{The source code can be obtained directly from the authors upon request.}.
In Section~\ref{sec:rel-work} we report on related work in the field
of network resilience and policy-compliant min-cuts.
Finally, we conclude the paper and give further outlooks for our work.

\section{Why are Policy-compliant Min-Cuts important?}
\label{sec:use cases}

Calculation of policy-compliant path diversity and bisection bandwidth 
can be applied on a wide range of scenarios to quantify the resilience 
and achievable throughput of a network. 
With our approach, the only requirement is that the network policy should 
be expressible with a regular expression; the form of the corresponding 
NFA dictates whether we can calculate the exact value or an approximation 
as will be described in Section~\ref{sec:graph_transform}. 
Many network policies used in practice are expressible through regular
expressions~\cite{Soule:2013:MNM:2535771.2535792}. We identify the following use cases
for policy-compliant min-cuts:
 
\paragraph*{Inter-Domain Valley-Free Routing} A basic use case is the 
calculation of the path diversity of an Internet conforming to the 
valley-free policy model~\cite{gao2001stable}. Path diversity in this case can refer 
to the number of edge-disjoint paths between two Autonomous Systems (AS), 
where each edge connects two neighboring ASes together. Edges are labelled as \emph{peer-to-peer} ($\mathtt{p2p}$),
\emph{provider-to-customer} ($\mathtt{p2c}$) or \emph{customer-to-provider} ($\mathtt{c2p}$)
relationships. This topology and corresponding edge labels can 
be obtained from datasets like CAIDA~\cite{caida-as-rel}. We note that, in reality, such links
correspond to multiple network layer links and even more physical links, \ie the calculated 
path diversity is therefore a lower bound of the physical path diversity. 
In the valley-free model, the global inter-domain policy can be expressed with the following
regular expression: $\mathtt{c2p^*p2p?p2c^*}$. Paths can only go first uphill
($\mathtt{c2p}$) and then downhill ($\mathtt{p2c}$), while at most one $\mathtt{p2p}$ link
can connect an uphill with a downhill transition forming a ``mountain'' with a $\mathtt{p2p}$
link on its ``peak''. The peak may also be sharper, with a direct transition from uphill to downhill.
We will revisit valley-free path diversity in Section~\ref{sec:eval},
where we examine tier one depeerings.
 
\paragraph*{Beyond Classic Policies - Negative Waypoint Routing}
Valley-free is a basic family of policies that approximates the current
market relationships in the Internet. Ideally, we would like to examine additional 
routing policies on top of the classic ones, across domains. Examples are \first \emph{waypoint routing},
\ie forcing the traffic to pass over certain waypoints before reaching its destination,
and \second \emph{negative routing}, \ie forcing the traffic to avoid certain nodes or links in the network.
Such policies further perplex path diversity calculations but are interesting for specific real-world scenarios
and use cases. 

Consider the following (slightly contrived) example which encapsulates waypoint and negative routing policies. 
We assume a valley-free Internet, in which each
inter-AS edge is directed and is annotated with a tuple label: \emph{(relationship\_type,
next\_AS)}, where \emph{relationship\_type} is $\mathtt{p2p}$, $\mathtt{p2c}$ or
$\mathtt{c2p}$ and \emph{next\_AS} is the edge-terminating AS.
A government organization in AS $\mathtt{A}$ wants to send traffic to one of its embassies 
in AS $\mathtt{B}$, located in another country.
The traffic from $\mathtt{A}$ to $\mathtt{B}$ needs to pass over a special 
encryption middlebox; there are two clones of this middlebox 
in AS $\mathtt{W_1}$ and AS $\mathtt{W_2}$
for redundancy. The original traffic can pass through any AS before it reaches 
waypoint $\mathtt{W_1}$ or $\mathtt{W_2}$, except for 
AS $\mathtt{X}$, which is governed by a rival administration. 
After either of the waypoints is traversed, traffic can go through any AS in the world---including $\mathtt{X}$---until 
it reaches its destination $\mathtt{B}$, where it is decrypted on the premises. 
This policy corresponds to a regular
expression (omitted here for brevity) which can be mapped in turn
to an NFA that accepts the expression, as depicted in Fig.~\ref{fig:neg_route_nfa}. 
The resulting policy-compliant NFA is then a composition of the valley-free
and the negative waypoint routing NFAs. 

In this particular case, the path diversity metric can help the traffic sender determine how
many inter-domain links could be brought down until the organization-to-embassy
communication is crippled, \eg in the event of a cyber-war launched by the rival country.
We view this as a ``negative waypoint'' inter-domain routing use case,
since we aim at approximating the multitude of edge-disjoint paths that \emph{avoid} 
a certain node or edge in the inter-AS graph and pass through certain waypoints. The 
problem cannot be solved by simply removing AS $\mathtt{X}$, pruning its corresponding 
edges and calculating the diversity on the pruned graph using the sender-waypoint
and waypoint-receiver node pairs. As the policy dictates, AS $\mathtt{X}$ \emph{can be
traversed} after the traffic has been processed by \emph{either} AS $\mathtt{W_1}$ or AS $\mathtt{W_2}$, but not sooner.
The NFA encompasses both this stateful routing decision process and the underlying
valley-free conditions, and allows us to encompass a complex policy in a simple
graph object. Having such NFAs at hand, we will explain later how the policy-compliant
diversity can be calculated. 

\begin{figure}
\centering
\includegraphics[width=\columnwidth,clip=true,trim=0 0 0 0]{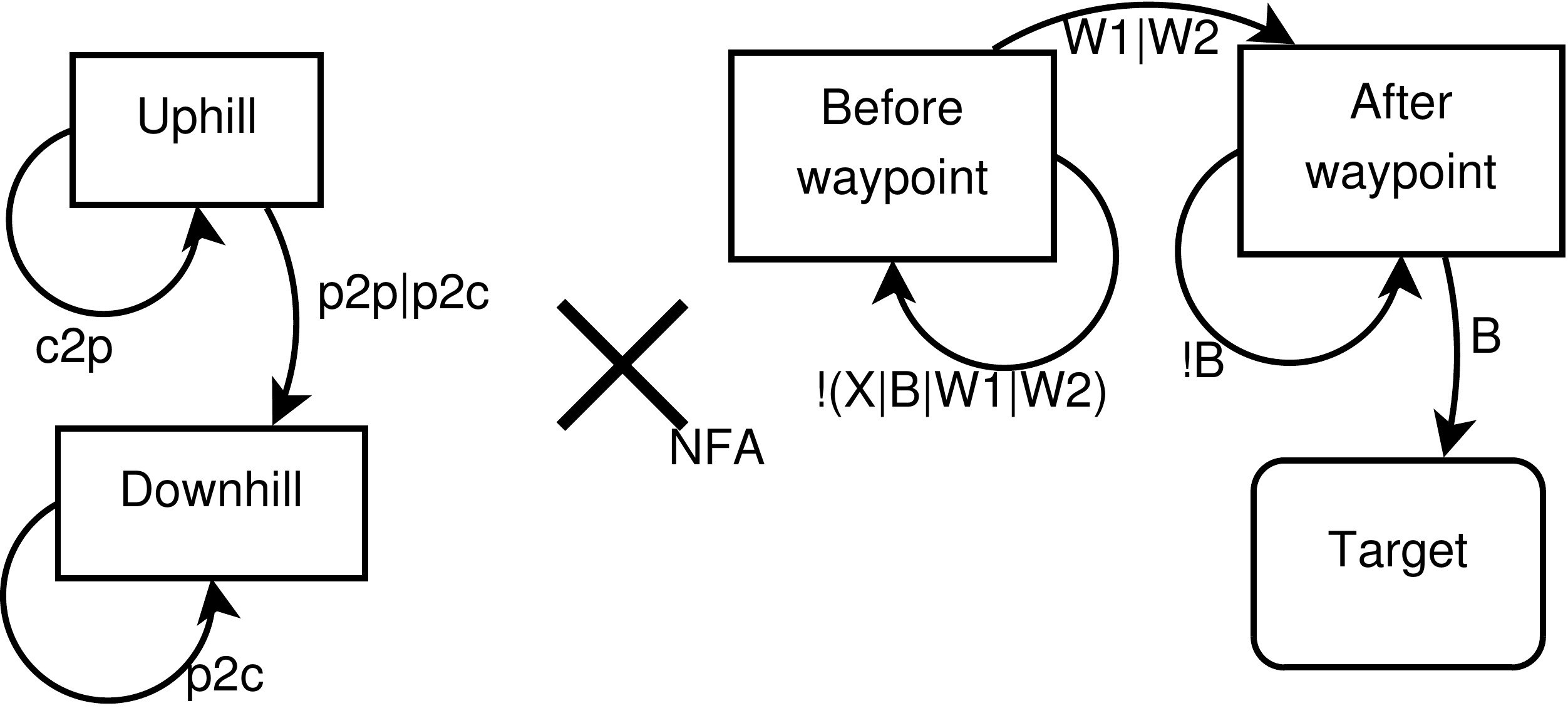}
\caption{NFAs for the negative waypoint routing use case: valley-free (left side) and waypoint routing policies (right side).}
\label{fig:neg_route_nfa}
\end{figure}

\paragraph*{Multipath TCP (MPTCP)} MPTCP~\cite{4032726} is a proposed extension to TCP from the IETF~\cite{rfc6824} 
allowing TCP connections to use multiple paths to increase resource utilization, redundancy
and availability. This is especially useful when multiple wireless channels with different properties are
available, or for better utilizing dense data center topologies, to exploit the large
bisection bandwidth available. Consider the data center scenario, where the operator 
has full control over the endpoints and switches and can route subflows individually.
Here, the number of disjoint available paths that MPTCP can send traffic over is useful 
information for the MPTCP implementation. Path diversity calculations in this case yield an approximation of the
maximal number of distinct flows that MPTCP can push in the network without these flows
contending with each other for bandwidth.
The approach for calculating the effective path diversity can take into account domain-specific
policies which have been, \eg expressed via Merlin~\cite{Soule:2013:MNM:2535771.2535792}. In addition, 
it enables data center operators to compute the 
policy-compliant available bandwidth between two areas of a network. This can help
to estimate the time that a MPTCP bulk data transfer may require, or
whether the network is utilized properly during the transfer.

\paragraph*{DDoS Link-Flooding Attacks}  Estimating the bisection
bandwidth of a data center can lead to an approximation of the attack budget that 
a DDoS link-flooding attack against data center core links, such as 
Crossfire~\cite{Kang:2013:CA:2497621.2498106}, may require. The recent attack 
against Spamhaus~\cite{prince2013ddos} indicates
that such information is valuable both for the attacker and the defender: the attacker tries 
to find weak links which he can deplete at minimal cost,
isolating entire domains from the Internet~\cite{Kang:2013:CA:2497621.2498106},
while the defender tries to increase the cost for the attacker via suitable network and
traffic engineering~\cite{chow2007distributed}. Knowing how much bandwidth needs to be
depleted to cut off a network from the rest of the Internet can help an operator perform
an informed risk assessment of a possible attack, while taking into account the routing policies
imposed over the network.

\section{Idea: Custom Graph Transform}\label{sec:idea}

Our objective is to transform a directed labelled graph $G$ into another graph $G'$,
such that: \first only policy-compliant paths exist in $G'$, and \second the transformation
does not distort the minimum cut. The minimum cut may represent either bisection bandwidth or,
by choosing unitary edge capacities, path diversity (\cf Menger's theorem~\cite{arlinghaus2001graph}).
We define a \emph{policy-compliant path} as any path whose \emph{path string} (the string resulting
from concatenating the edge labels) belongs to some regular language $L$. How do we accomplish this?
Every regular language $L$ is represented by a finite state automaton $M$, and vice versa.
Therefore, upon traversing an edge in the graph, we need to accordingly change the state in the automaton,
as if the edge label had been given to the automaton as an input.

The first idea is to use the \emph{tensor product} of the graph $G$ and the policy-checking
automaton $M$ (\fref{fig:tensor-product} gives an example).
From $M$ we define the transition graph $T$, which has a node set consisting
of the states of $M$ and an edge set representing the permitted state transitions of $M$.
The edges in $G$ and $T$ are labelled with policy-related symbols that result in the state
transitions represented in $T$. 
For each of these symbols $s \in \Sigma$, we form the subgraphs $G_s$ and $T_s$ consisting
of all the edges labelled with $s$. We then form the \emph{tensor product} of these
subgraphs, which is defined as follows: if $v_1$ and $v_2$ are nodes in $G$ and $q_1$ and
$q_2$ are nodes in $T$, then $G'$ contains the nodes $(v_1,q_1)$, $(v_1,q_2)$, $(v_2,q_1)$
and $(v_2,q_2)$. Likewise, if there is an edge $(v_1,v_2)$ in the subgraph $G_s$ and an
edge $(q_1,q_2)$ in the subgraph $T_s$, then the tensor product contains the edge
$((v_1,q_1),(v_2,q_2))$. The union of all of these tensor products is the transformed graph
$G'$. This allows us to move both between the nodes $v_1$ and $v_2$ in $G$ and the nodes
$q_1$ and $q_2$ in $T$ (and therefore the states $q_1$ and $q_2$ in $M$) at the
``same time''. We define $M$ to be a \emph{Non-deterministic} Finite state
Automaton (NFA) because an NFA is typically smaller than a corresponding \emph{Deterministic}
Finite state Automaton (DFA). 

This transform gives us part of the solution, but it does not guarantee that the
minimum cut calculated over $G'$ represents accurately the corresponding minimum cut of $G$.
A single edge in the original graph $G$ may be mapped to a set of parallel edges in the transformed graph $G'$.
Consider the possibility of a transition from one state $\lbrace q_1 \rbrace$ to two states
$\lbrace q_2,q_3 \rbrace$ over an edge $(v_1,v_2)$. This will be mapped to two
edges in $G'$: $((v_1,q_1),(v_2,q_2))$ and $((v_1,q_1),(v_2,q_3))$, even though there is just one edge in $G$.
Therefore, the second idea is to \emph{force} the transform to maintain the minimum cut. 
This may not always be possible, as we will explain later. 
If it is not, the result still is an approximation of the minimum cut, for which
we will give lower and upper bounds in section~\ref{sec:correctness}.

\begin{figure}[t!]
\centering
\begin{subfigure}[b]{0.49\columnwidth}
\includegraphics[width=\columnwidth,clip=true,trim=0 0 0 0]{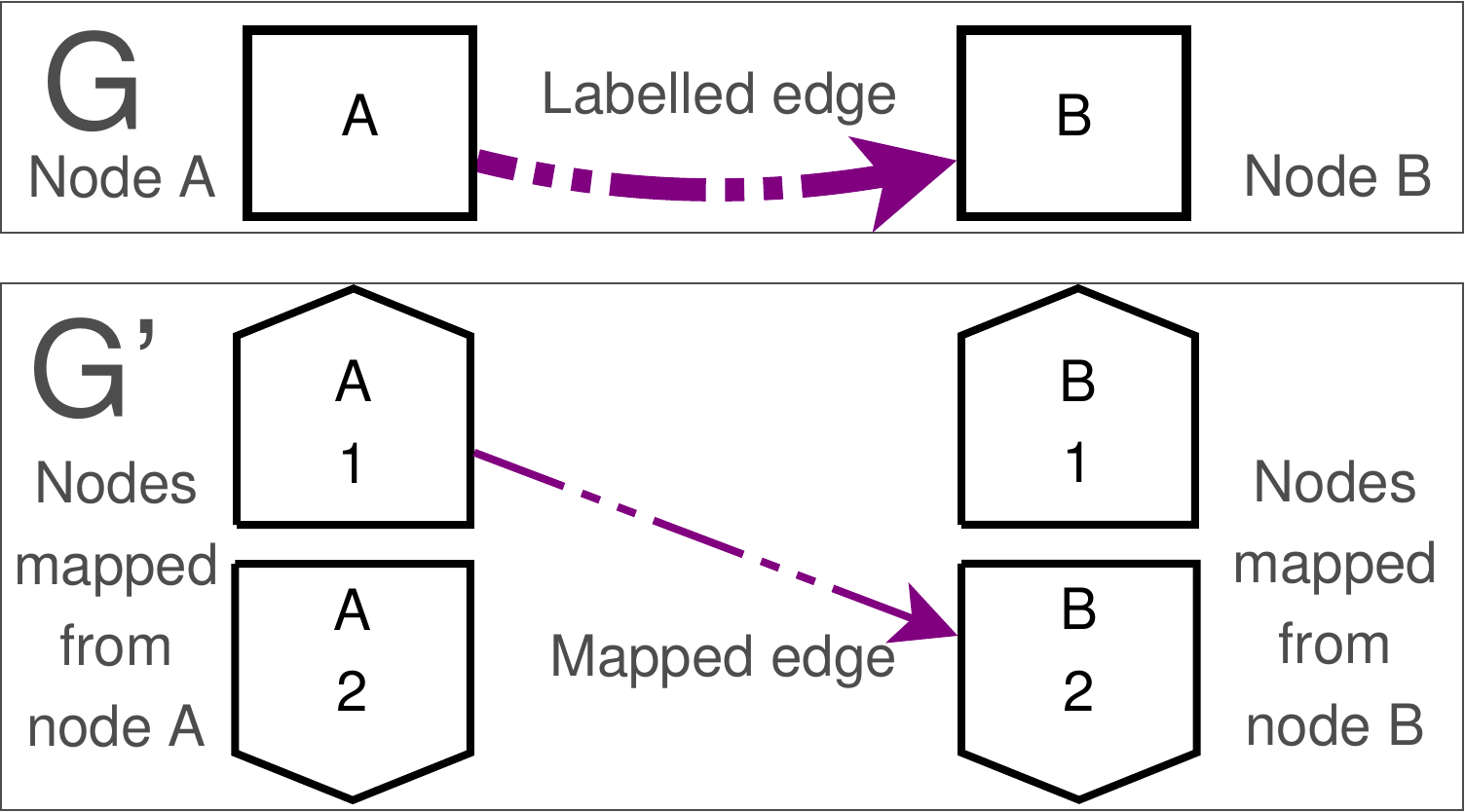} 
\caption{Original$\rightarrow$Transformed Graph}\label{fig:tensor-product-graph}
\end{subfigure}
\begin{subfigure}[b]{0.49\columnwidth}
\includegraphics[width=\columnwidth,clip=true,trim=0 0 0 0]{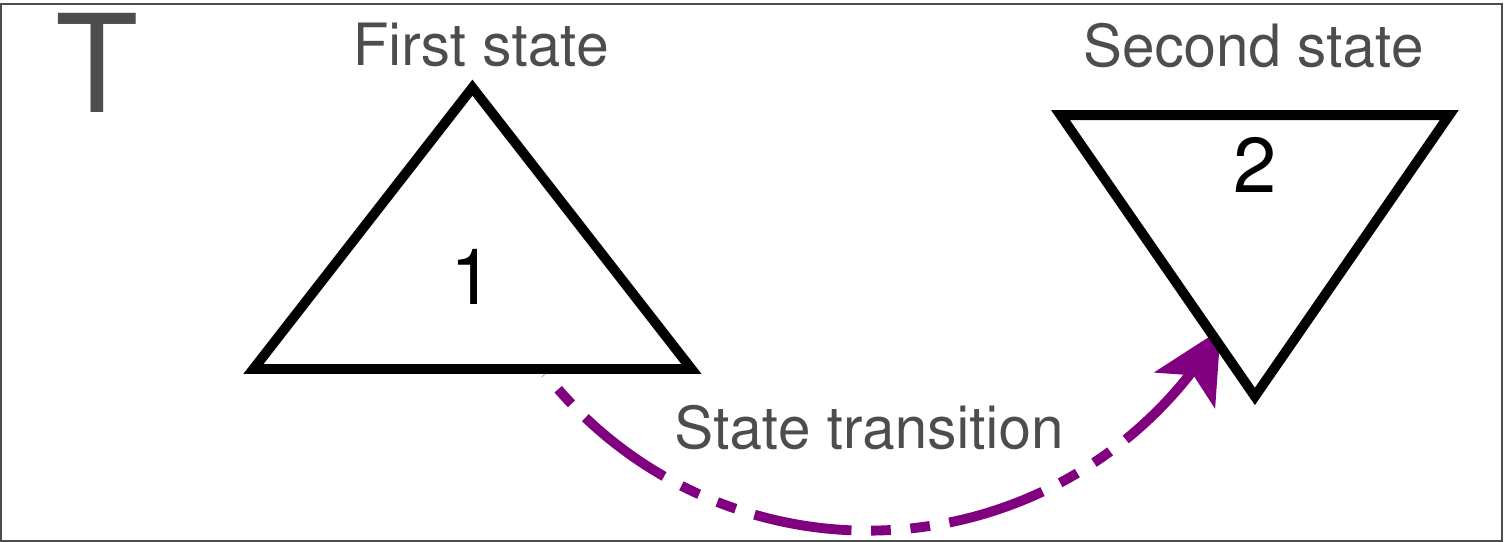} 
\caption{NFA}\label{fig:tensor-product-nfa}
\end{subfigure}
\caption{The tensor product. For each label $s \in \Sigma$, the tensor product of the edges
in $G$ labelled with $s$ and the edges in $T$ labelled with $s$ (representing the state
transitions in $M$ when $s$ is entered as an input) yields the set of mapped edges in $G'$.
We represent the edge label $s$ symbolically via the colour and style of the edge. Nodes in $G'$
are derived from the triangles of the NFA and the squares of the original graph $G$.}
\label{fig:tensor-product}
\end{figure}

The core idea is to add \emph{aggregator} states to the NFA and consequently to $G'$ so that
the min-cut paths between two nodes must traverse at most the same number of parallel edges
as in $G$, which limits the minimum cut in $G'$ to the same value as in $G$. 
To preserve the structure of the NFA---and thus the policy it describes---we utilize 
\emph{$\varepsilon$-transitions}. \emph{$\varepsilon$-transitions}  
can be thought of as ``free'' transitions: they do not consume a symbol during traversal. 
In our case, this means that we
do not need to traverse an edge between nodes in $G$ in order to traverse an
$\varepsilon$-transition. 
Where there is a chance of the min-cut being inflated, we add an
aggregator node (for each edge label) and use $\varepsilon$-transitions to ``channel'' all
of the paths through this node. Each node in $G$ has corresponding aggregator
nodes in $G'$ \emph{as needed} and where \emph{it is applicable}. 

There are several possible cases for the aggregated transitions: \first from a single state
to another single state (\emph{one-to-one}), \second from a single state to multiple states
(\emph{one-to-many}), \third from multiple states to a single state (\emph{many-to-one}),
or \fourth from multiple states to multiple states (\emph{many-to-many}). In the first case,
the min-cut is always invariant; no inflation can occur.
The latter cases are depicted in \fref{fig:divergent}, \fref{fig:convergent},
and \fref{fig:many-many}, respectively. In the \emph{one-to-many} and \emph{many-to-one} cases 
the addition of aggregation nodes leads to the correct min-cut value, while the \emph{many-to-many} 
case requires careful consideration. 
In this case, we can only maintain the accurate min-cut using aggregation states if
the set of state transitions can be expressed as the Cartesian product of two subsets of the set of state nodes. 
This results in complete bipartite subgraphs on the transformed graph $G'$
with aggregatable transitions, as shown in \fref{fig:many-many}.
Therefore, if this is not the case, we need to break the transition down to $n$ disjoint state transition 
sets of the cases \first to \fourth and perform the transform for each of them; each set introduces an 
extra aggregator state.
Regarding edge capacities we provide different values to yield upper and lower bounds on 
the min-cut, as we will show later.

\begin{figure}[t!]
\centering
\begin{subfigure}[b]{0.49\columnwidth}
\includegraphics[width=\columnwidth,clip=true,trim=50 250 45 50]{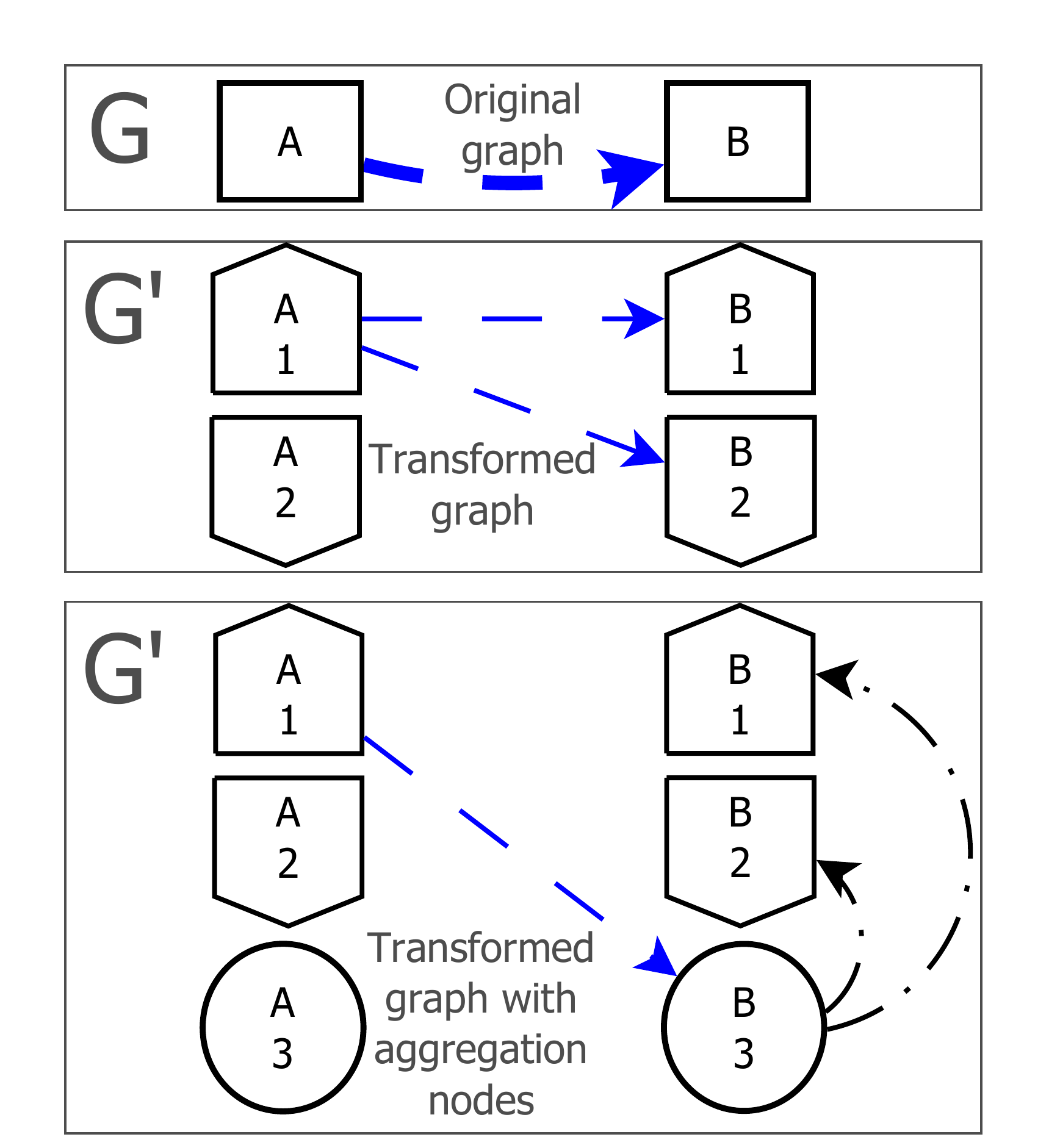} 
\caption{Original$\rightarrow$Transformed Graph}\label{fig:aggregator-nodes-divergent}
\end{subfigure}
\begin{subfigure}[b]{0.49\columnwidth}
\includegraphics[width=\columnwidth,clip=true,trim=0 0 0 0]{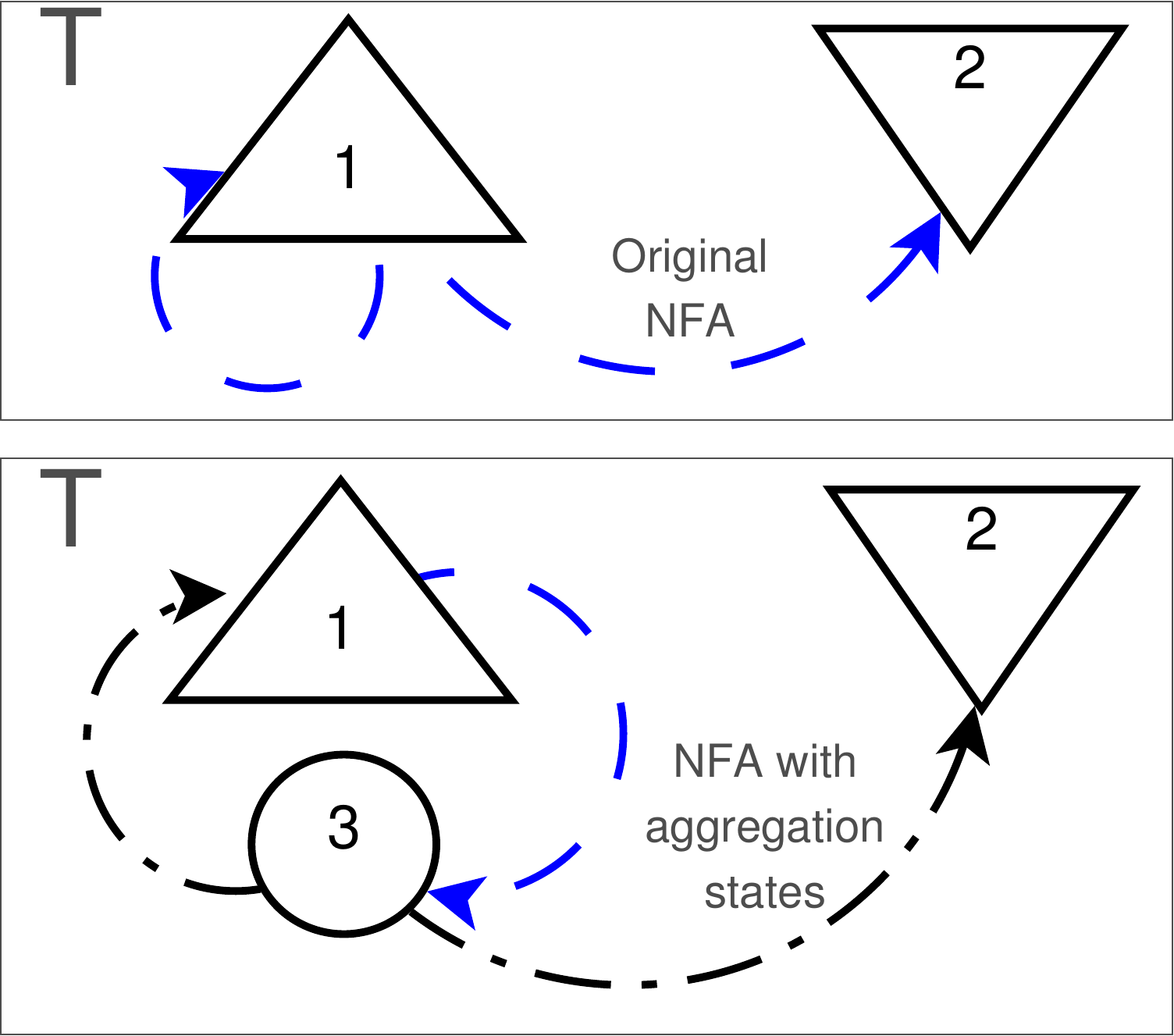} 
\caption{NFA}\label{fig:nfa-aggregator-nodes-divergent}
\end{subfigure}
\caption{The use of aggregation nodes (represented with circles) for a \emph{one-to-many} state mapping.
The two blue dashed transitions of the NFA are aggregated using one aggregation node. 
$\varepsilon$-transitions are represented by black dotted dashed lines.}
\label{fig:divergent}
\end{figure}

\begin{figure}[h!]
\centering
\begin{subfigure}[b]{0.49\columnwidth}
\includegraphics[width=\columnwidth,clip=true,trim=50 250 45 50]{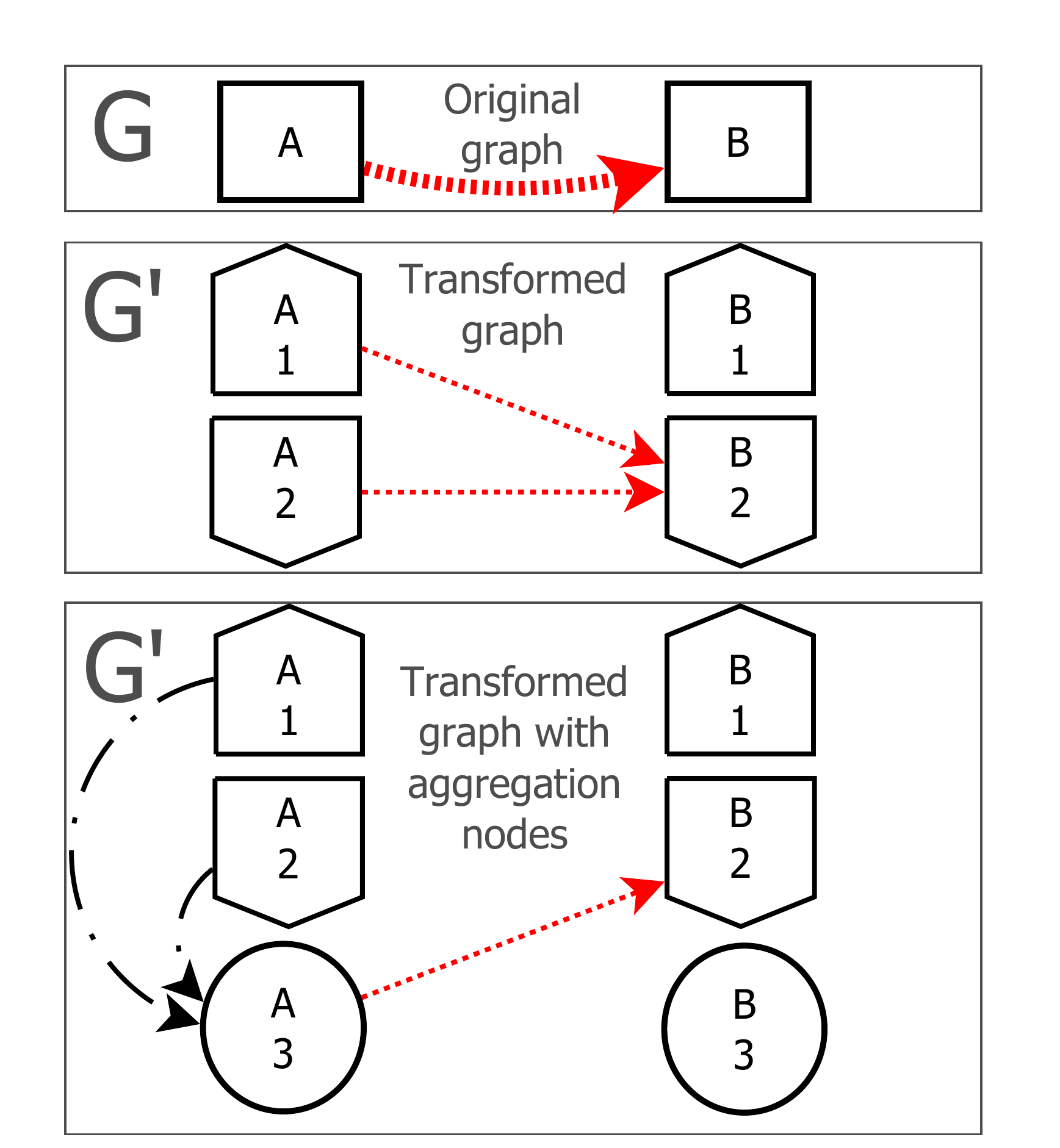} 
\caption{Original$\rightarrow$Transformed Graph}\label{fig:aggregator-nodes-convergent}
\end{subfigure}
\begin{subfigure}[b]{0.49\columnwidth}
\includegraphics[width=\columnwidth,clip=true,trim=0 0 0 0]{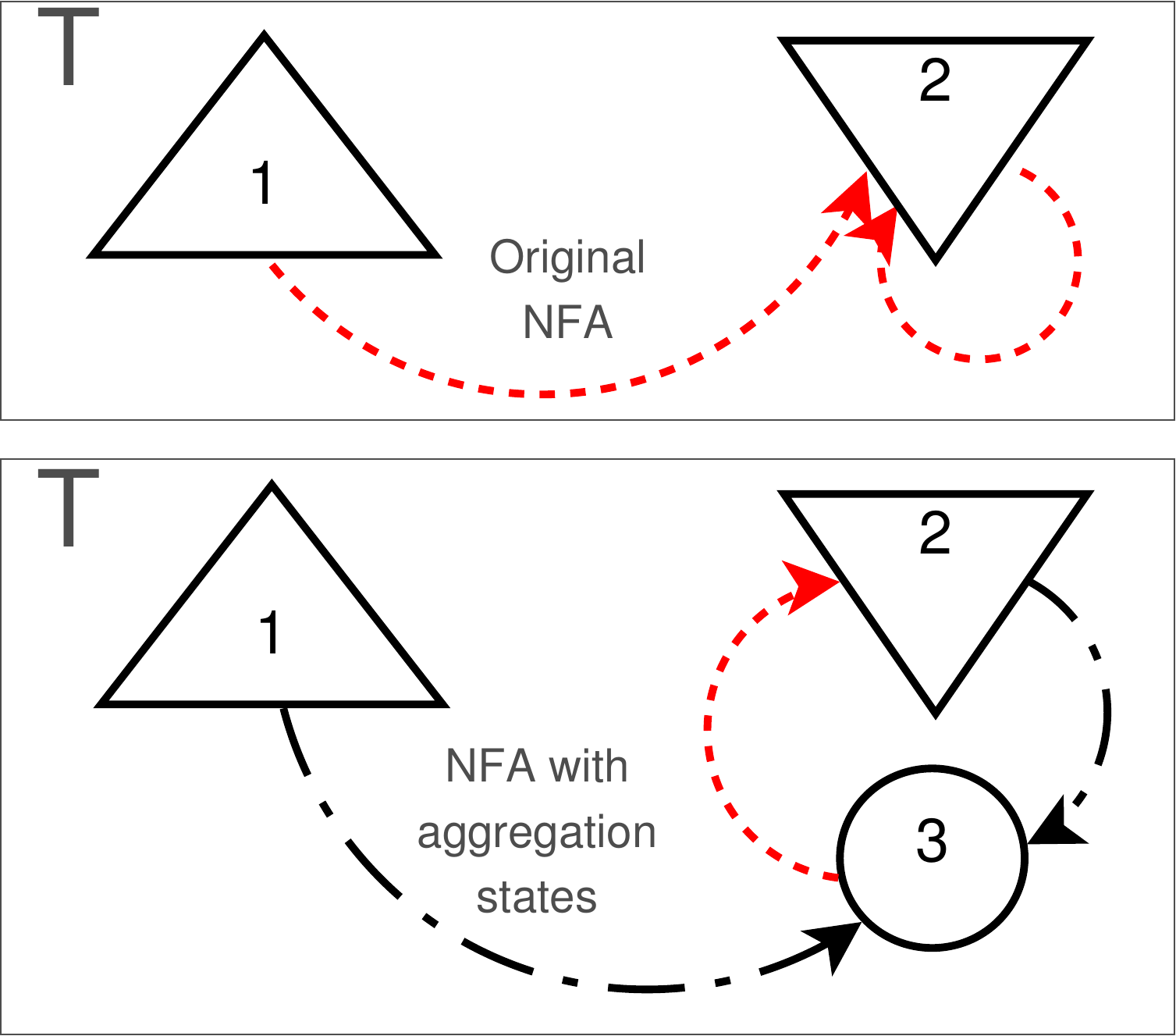} 
\caption{NFA}\label{fig:nfa-aggregator-nodes-convergent}
\end{subfigure}
\caption{The use of aggregation nodes (represented with circles) for a \emph{many-to-one} state mapping.
The two red dotted transitions of the NFA are aggregated using one aggregation node.}
\label{fig:convergent}
\end{figure}

\begin{figure}[h!]
\centering
\begin{subfigure}[b]{0.49\columnwidth}
\includegraphics[width=\columnwidth,clip=true,trim=50 150 45 50]{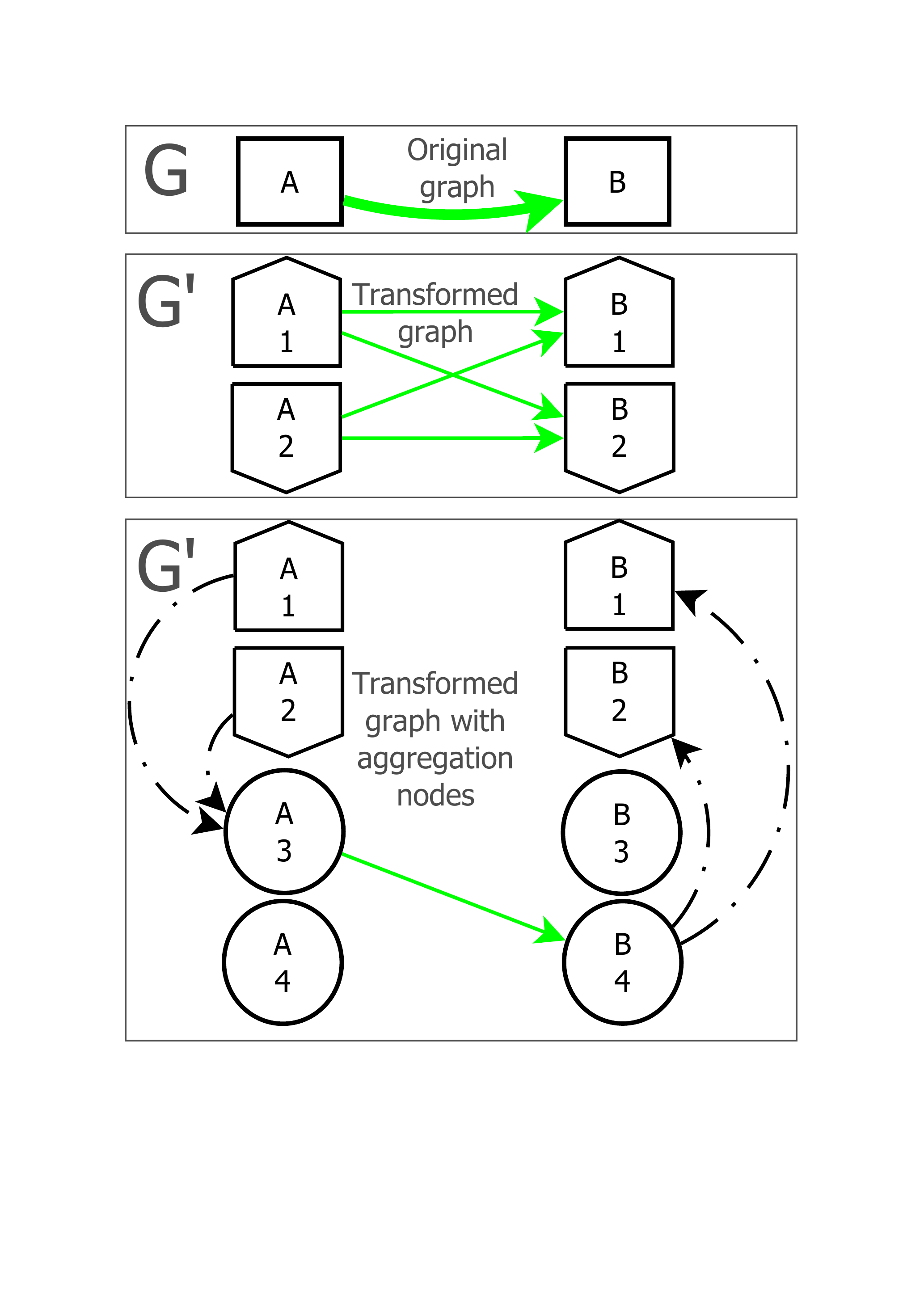} 
\caption{Original$\rightarrow$Transformed Graph}\label{fig:many-many-transformed}
\end{subfigure}
\begin{subfigure}[b]{0.49\columnwidth}
\includegraphics[width=\columnwidth,clip=true,trim=0 0 0 0]{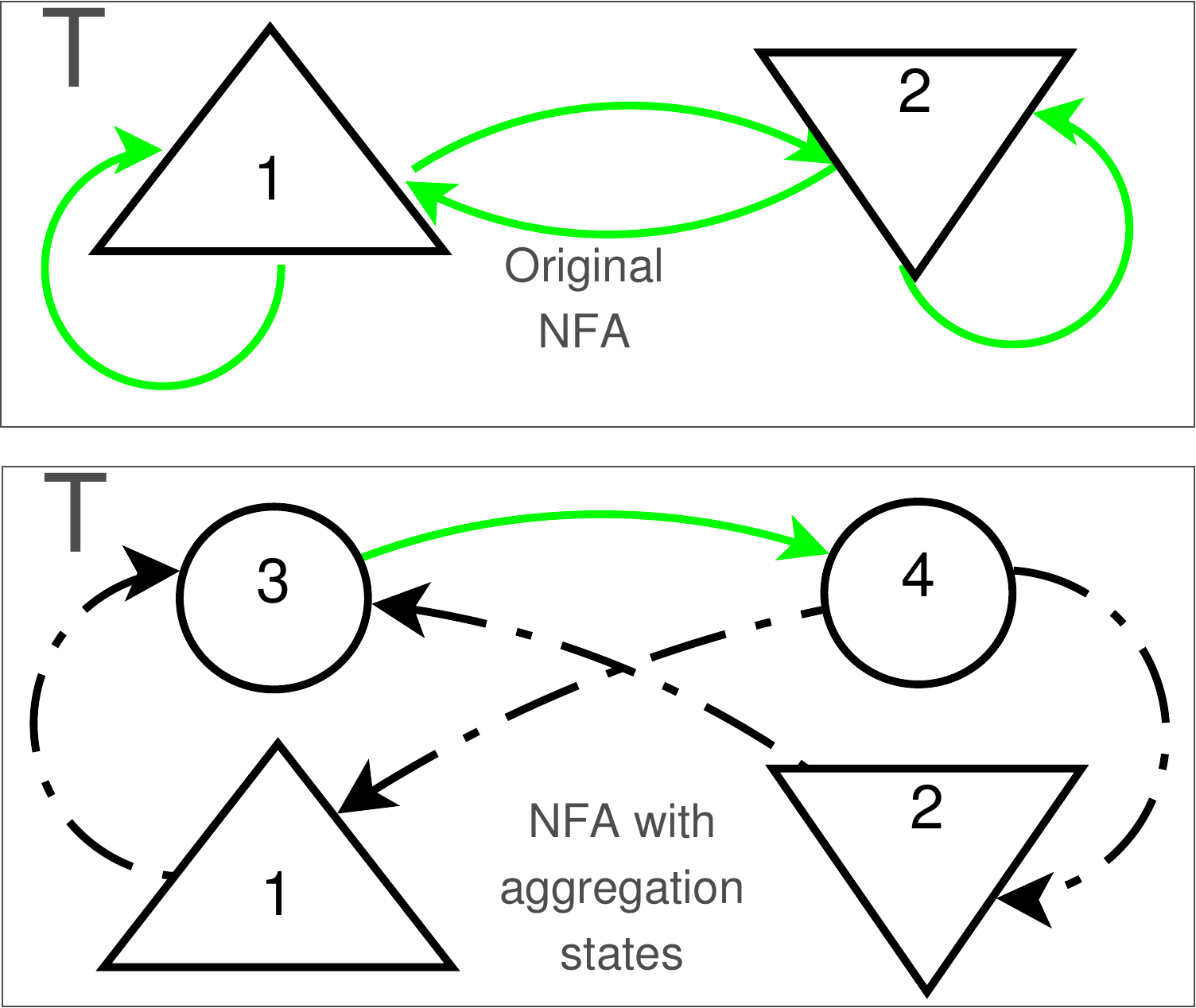} 
\caption{NFA}\label{fig:many-many-NFA}
\end{subfigure}
\caption{The use of aggregation nodes for a \emph{many-to-many} state mapping.
The four green transitions of the NFA are aggregated using two aggregation nodes.}
\label{fig:many-many}
\end{figure}

To calculate the min-cut with conventional algorithms, we need to have a single
terminating node in $G'$, which implies a single terminating state in $M$. If we have more
than one terminating state, then prior to proceeding we need to create a virtual terminating
state and copy all of the transitions to the previous terminating states to the new one. 
Finally, if we want to consider constraints on nodes as well as edges, we can split the
nodes under consideration in two halves as follows: all of the incoming edges connect to one
half and all of the outgoing edges connect to the other. We then add a labelled edge between
those halves, allowing the edge to represent the node and effectively encompass its constraints.

\section{The Math Behind the Transform}\label{sec:graph_transform}

Here we present the required definitions, algorithmic steps, 
proofs of correctness, and complexity of the algorithm.

\subsection{Notation and Definitions}\label{sec:definitions}
\begin{itemize}

\item $G = (V,E)$ is a labelled directed graph with nodes $V$, edges $E$, a labelling
function $l: E \rightarrow \Sigma$ that maps edges in $E$ to corresponding labels in
$\Sigma$, and a capacity function $c: E \rightarrow \mathbb{R}^+$ that maps edges to edge
capacities. For path diversity calculations, we choose $c(e) = 1$ for all $e \in E$.

\item $L \subseteq \Sigma^{*}$ is a regular language. We require that for any path
$(e_1,...,e_n)$ in $G$, the string $l(e_1)...l(e_n)$ formed of the edge labels be in
$L$.

\item $M = (Q,\Sigma,\Delta,q_{0},F)$ is a NFA with states $Q$, input symbols $\Sigma$,
state transitions $\Delta \subseteq Q \times \Sigma \times Q$, terminating states
$F \subseteq Q$ and a  starting state $q_{0} \in Q$. $M$ accepts $L$.
$\varepsilon$-transitions are allowed with $\varepsilon \in \Sigma$.
\item $T = (Q,\Delta)$ is the directed graph derived from $M$.
\item $G' = (V',E')$ is the final transformed graph: its formation is such that 
the policy constraints are met and the minimum cut is not inflated, if possible.

\end{itemize}

\subsection{Single Virtual Termination States}\label{sec:termination}
In order to have a single terminating node in $G'$ for use with Ford-Fulkerson (or another
flow algorithm), the set of terminating states $F$ in $M$ must be no larger than one.
If $|F| > 1$, we add a new termination state $q^*$. A state transition to $q^*$ is added
from every other state which previously had a transition to a terminating state, and $F$ is updated accordingly:
\begin{equation}\label{eq:prestep1}
Q := Q \cup \lbrace q^* \rbrace
\end{equation}
\begin{equation}\label{eq:prestep2}
\Delta := \Delta \cup \lbrace (q',s,q^*) \mid (q',s,q) \in \Delta \wedge q \in F \rbrace
\end{equation}
\begin{equation}\label{eq:prestep3}
F := \lbrace q^* \rbrace
\end{equation}

\subsection{Tensor Product Transform: Steps}\label{sec:transform}

\begin{enumerate}

\item We form subgraphs $G_{s} \subseteq G$ consisting of all the edges labelled with
$s \in \Sigma$:
\begin{equation}\label{eq:step1}
G_{s} := (V,E_{s})
\end{equation}
\begin{equation}\label{eq:step2}
E_{s} := \lbrace (v_{1},v_{2}) \mid (v_{1},v_{2}) \in E \wedge l((v_{1},v_{2})) = s \rbrace
\end{equation}

\item We form subgraphs $T_{s} \subseteq T$ of all the state transitions for an input symbol
$s \in \Sigma$:
\begin{equation}\label{eq:step3}
T_{s} := (Q,\Delta_{s})
\end{equation}
\begin{equation}\label{eq:step4}
\Delta_{s} := \lbrace (q_{1},q_{2}) \mid (q_{1},s',q_{2}) \in \Delta \wedge s' = s \rbrace
\end{equation}

\item We augment $\Delta_{s}$ with aggregator states, if necessary. \first We require that
$\Delta_{s}$ has the form $Q'_s \times Q''_s$ with $Q'_s \subseteq Q$ and
$Q''_s \subseteq Q$. \second If this is not the case, 
we decompose $\Delta_s$ into $n_s$ disjoint sets $\Delta_{s,k}$ such that each subset has
the form $Q'_{s,k} \times Q''_{s,k}$ and repeat the following for every $k$:

\begin{enumerate}

\item We add a new aggregator state $q'_s$ if $\vert Q'_s \vert > 1$ and $q''_s$ if
$\vert Q''_s \vert > 1$:
\begin{equation}\label{eq:step5a1}
Q := \left\{ \begin{array}{l l}
Q & \vert Q'_s \vert = 1 \\
Q \cup \lbrace q'_s \rbrace & \vert Q'_s \vert > 1
\end{array} \right.
\end{equation}
\begin{equation}\label{eq:step5a2}
Q := \left\{ \begin{array}{l l}
Q & \vert Q''_s \vert = 1 \\
Q \cup \lbrace q''_s \rbrace & \vert Q''_s \vert > 1
\end{array} \right.
\end{equation}

\item If we added a $q'_s$ in the previous step, we connect the preceding states to it with
an $\varepsilon$-edge. Likewise for $q''_s$ and succeeding states:
\begin{equation}\label{eq:step5b1}
\Delta_{\varepsilon} := \left\{ \begin{array}{l l}
\Delta_{\varepsilon} & \vert Q'_s \vert = 1 \\
\Delta_{\varepsilon} \cup (Q'_s \times \lbrace q'_s \rbrace) & \vert Q'_s \vert > 1
\end{array} \right.
\end{equation}
\begin{equation}\label{eq:step5b2}
\Delta_{\varepsilon} := \left\{ \begin{array}{l l}
\Delta_{\varepsilon} & \vert Q''_s \vert = 1 \\
\Delta_{\varepsilon} \cup (\lbrace q''_s \rbrace \times Q''_s) & \vert Q''_s \vert > 1
\end{array} \right.
\end{equation}

\item Finally, we connect states $q'_s$ and $q''_s$ with the aggregated state transition edge. If we did not use aggregating
nodes on either side because there was only one state in $Q'_s$ or $Q''_s$, we use that single
state instead:
\begin{equation}\label{eq:step5c1}
Q'_s := \left\{ \begin{array}{l l}
Q'_s & \vert Q'_s \vert = 1 \\
\lbrace q'_s \rbrace & \vert Q'_s \vert > 1
\end{array} \right.
\end{equation}
\begin{equation}\label{eq:step5c2}
Q''_s := \left\{ \begin{array}{l l}
Q''_s & \vert Q''_s \vert = 1 \\
\lbrace q''_s \rbrace & \vert Q''_s \vert > 1
\end{array} \right.
\end{equation}
\begin{equation}\label{eq:step5c3}
\Delta_s = Q'_s \times Q''_s
\end{equation}

\end{enumerate}

\item For each pair of $G_{s}$ and $T_{s}$ that we derived in the previous steps, we calculate the
\emph{tensor product} $G'_{s} = G_{s} \times T_{s}$. The nodes  of $G_s'$ are given by:
\begin{equation}\label{eq:step6a}
V' := V \times Q
\end{equation}
The edges of $G_s'$ are given by:
\begin{align}\label{eq:step6b}
E'_s := \lbrace ((v_{1},q_{1}),(v_{2},q_{2})) \nonumber \\ \mid (v_{1},v_{2}) \in E_{s} \wedge (q_{1},q_{2}) \in \Delta_{s} \rbrace
\end{align}
Furthermore, $\varepsilon$-transitions are mapped to edges in $G'$ that are effectively located within a single node in
$G$:
\begin{equation}\label{eq:step6c}
E'_{\varepsilon} := \lbrace ((v,q_{1}),(v,q_{2})) \mid v \in V \wedge (q_{1},q_{2}) \in \Delta_{\varepsilon} \rbrace
\end{equation}

\item $G'$ is the union of $G'_{s}$ for each $s \in \Sigma$ (including $\varepsilon$):
\begin{equation}\label{eq:step7}
G' := \bigcup_{s \in \Sigma} G'_{s} = (V',\bigcup_{s \in \Sigma} E'_s)
\end{equation}
\item For the edge capacities $c': E' \rightarrow \mathbb{R}^+$ in $G'$, we have different values for
the upper and lower bound minimum cut. \eqref{eq:step8a} and \eqref{eq:step8b} give the
capacities $c'_{upper}$ and $c'_{lower}$ for the upper and lower bound minimum cuts,
respectively. 
\begin{equation}\label{eq:step8a}
c'_{upper}(((v_1,q_1),(v_2,q_2))) = \left\{ \begin{array}{l l}
c(v_1,v_2) & v_1 \neq v_2 \\
\infty & v_1 = v_2 \\
\end{array} \right.
\end{equation}
\begin{equation}\label{eq:step8b}
c'_{lower}(((v_1,q_1),(v_2,q_2))) = \left\{ \begin{array}{l l}
\frac{c(v_1,v_2)}{n_{s}} & v_1 \neq v_2 \\
\infty & v_1 = v_2 \\
\end{array} \right.
\end{equation}
Where $s = l((v_1,v_2))$ and $n_s$ is the number of disjoint
sets $\Delta_{s,k}$ that $\Delta_s$ is decomposed into in step 3.

\end{enumerate}

\subsection{Correctness}\label{sec:correctness}

We next prove that the transformed graph contains only valid
policy-compliant paths (claims~\ref{thm:reverse-correctness} and \ref{thm:forward-correctness}),
that capacities $c'_{lower}$ and $c'_{upper}$ yield lower and upper 
bounds of the min-cut (claims~\ref{thm:min-cut-lower-bound} and \ref{thm:min-cut-upper-bound}), 
and that we obtain an
exact value for the policy-compliant min-cut as long as certain 
conditions hold, pertaining to the form of the NFA (claim~\ref{thm:min-cut-exact-value}).

\begin{claim}\label{thm:reverse-correctness}
Given a path $P$ in $G$, if the string formed by the concatenation of the edge labels of $P$
is not in $L$, then no corresponding path $P'$ exists in $G'$.
\end{claim}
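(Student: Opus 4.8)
The plan is to prove the contrapositive: assuming that a corresponding path $P'$ exists in $G'$, I will show that the path string $w = l(e_1)\cdots l(e_n)$ of $P = (e_1,\dots,e_n)$ lies in $L$. First I would pin down the meaning of \emph{corresponding}: writing $\pi(v,q)=v$ for the projection of a node of $G'$ onto its $G$-coordinate, a path $P'$ corresponds to $P$ if it runs from the initial-state copy $(v_0,q_0)$ of the source to a terminating-state copy $(v_n,q_f)$ with $q_f\in F$, and if deleting from $P'$ every step that stays within a single $G$-node (i.e.\ every $\varepsilon$-edge of \eqref{eq:step6c}) and then applying $\pi$ reproduces exactly $P$. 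This is the only reading under which the statement is both meaningful and usable by the later flow argument, since the source and sink of the min-cut computation are precisely $(v_0,q_0)$ and $(v_n,q_f)$; without the constraints ``start at $q_0$'' and ``end in $F$'' the claim would simply be false, as a non-accepting run on $w$ or a run on a prefix of $w$ would still project onto $P$.

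The heart of the argument is a correspondence lemma: every path $P'$ in $G'$ that projects to $P$ induces, by reading off its second coordinates, a walk in the augmented transition graph $T$ that is an accepting run of the augmented automaton on $w$. I would establish this by induction on the length of $P'$ using the edge definitions directly. By \eqref{eq:step6b}, a step of $P'$ that advances the $G$-coordinate along $e_i=(v_{i-1},v_i)$ forces the state coordinate to follow an edge $(q_{i-1},q_i)\in\Delta_s$ with $s=l(e_i)$, i.e.\ an $s$-labelled transition; by \eqref{eq:step6c}, a step that keeps the $G$-coordinate fixed forces an $\varepsilon$-transition. Hence the sequence of non-$\varepsilon$ symbols consumed along $P'$ is exactly $w$, the interleaved $\varepsilon$-steps correspond to traversals of the aggregator chains $q_a \xrightarrow{\varepsilon} q'_s \xrightarrow{s} q''_s \xrightarrow{\varepsilon} q_b$ introduced in step~3, and the induced run begins at $q_0$ and ends at $q_f\in F$.

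It then remains to observe that the aggregator augmentation preserves the accepted language. Each aggregator gadget replaces a bundle of transitions $Q'_s\times Q''_s$ by the two-hop $\varepsilon$-padded transition through $q'_s$ and $q''_s$, which has the same effect on input symbols once $\varepsilon$-closures are taken; this is the standard fact that $\varepsilon$-transitions neither enlarge nor shrink the language of an NFA, and it is exactly the structure-preservation asserted in Section~\ref{sec:idea}. Consequently the augmented automaton still accepts $L$, so the accepting run on $w$ witnesses $w\in L$, contradicting $w\notin L$; the contrapositive then yields the claim. I expect the main obstacle to be the bookkeeping in the correspondence lemma: because a single edge of $G$ may lift to several parallel edges of $G'$ and because $\varepsilon$-edges interleave freely with genuine moves, one must argue carefully that stripping the $\varepsilon$-steps recovers $P$ without gaps or repetitions, and that no spurious run---one not starting at $q_0$ or not ending in $F$---is mistakenly counted as corresponding.
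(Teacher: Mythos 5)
Your proposal is correct, but it follows a genuinely different route from the paper's own proof. The paper argues the claim directly: since $w = l(e_1)\cdots l(e_n) \notin L$, it asserts that some edge of $P$ must carry a label admitting no outgoing transition from the state reached so far, so the tensor product \eqref{eq:step6b} contributes no edge at that point and $P'$ cannot be formed. You instead prove the contrapositive, via a run-extraction lemma (the state coordinates of any path projecting onto $P$ trace a run of the augmented automaton on $w$, with \eqref{eq:step6c} confining $\varepsilon$-steps inside single $G$-nodes) combined with the observation that the aggregator gadgets of step~3 preserve the accepted language because each $\Delta_{s,k}$ is a Cartesian product $Q'_{s,k} \times Q''_{s,k}$. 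Your route buys rigor exactly where the paper's direct argument is thin: for an NFA, $w \notin L$ does \emph{not} imply that every run gets stuck on some symbol --- a run can consume all of $w$ and simply halt in a non-accepting state, in which case a path projecting onto $P$ \emph{does} exist in $G'$; it merely fails to reach a terminating-state copy $(v_n,q_t)$. Your insistence on building the endpoint conditions (start at $(v_1,q_0)$, end in a copy of $F$) into the definition of ``corresponding'' is exactly what makes the statement true and is what \lref{thm:min-cut-lower-bound} and \lref{thm:min-cut-upper-bound} actually rely on when they equate flows in $G'$ with policy-compliant flows in $G$. What the paper's version buys is brevity, but it silently presumes a single ``preceding state,'' a deterministic notion; your tracking of runs rather than states handles the nondeterminism cleanly.
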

\begin{proof}
The edge labels of $P$ form a string. This string contains at least one edge $e$ with a
label $s \in \Sigma$ which results in the string no longer being in $L$. This implies that
there is no outgoing state transition from the preceding state to any other state in the
NFA. The tensor product of the NFA with the edge $e$ is thus empty. Therefore, there is
no edge to the next node mapped from $P$ and no $P'$ can be formed in $G'$.
\end{proof}

\begin{claim}\label{thm:forward-correctness}
If a path $P$ exists in $G$ and the string formed by the concatenation of the edge labels of
$P$ is in $L$, then there exists a corresponding path $P'$ in $G'$.
\end{claim}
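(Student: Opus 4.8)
The plan is to exhibit $P'$ explicitly by lifting an accepting run of the automaton on the path string into the product graph. Write $P = (e_1, \ldots, e_n)$ with $e_i = (v_{i-1}, v_i)$ and set $w = l(e_1)\cdots l(e_n)$; by hypothesis $w \in L$. Since $M$ accepts $L$, there is an accepting computation visiting states $q_0, q_1, \ldots, q_n$ with $q_0$ the start state, $q_n \in F$, and $(q_{i-1}, l(e_i), q_i) \in \Delta$ for each $i$, i.e. $(q_{i-1}, q_i) \in \Delta_{l(e_i)}$. The idea is that each such symbol transition, paired with the corresponding edge $e_i$ of $P$, is exactly the kind of pair that \eqref{eq:step6b} turns into an edge of $G'$, so that the run and the path can be ``walked'' simultaneously.

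First I would account for the preprocessing, since $G'$ is built not from $M$ directly but from its augmented version. For the single-termination construction (Section~\ref{sec:termination}), I note that \eqref{eq:prestep2} adds a transition into $q^*$ wherever a transition into some $q \in F$ existed, so the final symbol transition $(q_{n-1}, l(e_n), q_n)$ may be replaced by $(q_{n-1}, l(e_n), q^*)$; hence without loss of generality the accepting run ends in the unique terminating state. Next I would argue that the aggregator augmentation of step~3 preserves acceptance of $w$: each original transition $(q_a, s, q_b) \in \Delta_s$ lies in some Cartesian piece $Q'_{s,k} \times Q''_{s,k}$, and the construction replaces it by the detour $q_a \xrightarrow{\varepsilon} q'_{s,k} \xrightarrow{s} q''_{s,k} \xrightarrow{\varepsilon} q_b$ (an end $\varepsilon$-hop omitted when $|Q'_{s,k}| = 1$ or $|Q''_{s,k}| = 1$). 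Because $\varepsilon$-transitions consume no input, the detour still consumes exactly $s$, so the run of $M$ lifts to an accepting run of the augmented automaton, now interleaved with $\varepsilon$-steps through aggregators.

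The core step is then the simultaneous walk, by induction on $i$. Suppose we have reached node $(v_{i-1}, q_{i-1})$ in $G'$. For the $i$-th symbol the augmented run takes the detour above, and I map it to the $G'$-subpath $(v_{i-1}, q_{i-1}) \to (v_{i-1}, q'_{s,k}) \to (v_i, q''_{s,k}) \to (v_i, q_i)$, where $s = l(e_i)$. The first and last hops are $\varepsilon$-edges that exist in $G'$ by \eqref{eq:step6c}, since they stay within a single $G$-node ($v_{i-1}$ and $v_i$ respectively); the middle hop combines the edge $(v_{i-1}, v_i) = e_i \in E_s$ with the transition $(q'_{s,k}, q''_{s,k}) \in \Delta_s$ and so exists by \eqref{eq:step6b}. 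Concatenating these subpaths for $i = 1, \ldots, n$ yields a walk $P'$ from $(v_0, q_0)$ to $(v_n, q^*)$ whose first-coordinate projection, after collapsing the $\varepsilon$-hops, is precisely $v_0, v_1, \ldots, v_n$; thus $P'$ is a path in $G'$ corresponding to $P$.

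The main obstacle I anticipate is bookkeeping around the aggregator augmentation rather than any deep difficulty: one must verify that the decomposition of $\Delta_s$ into Cartesian pieces really covers every original transition, so that for the specific $q_a, q_b$ used by the run the required $\varepsilon$-edges $q_a \to q'_{s,k}$ and $q''_{s,k} \to q_b$ are present, and one must handle cleanly the degenerate cases $|Q'_{s,k}| = 1$ or $|Q''_{s,k}| = 1$ where the detour collapses. By contrast, the simplicity of $P'$ needs no separate argument: whenever $P$ is a simple path its $G$-coordinates $v_0, \ldots, v_n$ are distinct, so even if the accepting run revisits a state the corresponding product nodes still differ in their first coordinate, and $P'$ cannot repeat a node.
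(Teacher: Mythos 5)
Your proof is correct and takes essentially the same approach as the paper's: lift the accepting run of the NFA on the path string through the tensor product, walking $P$ and the run simultaneously to construct $P'$ edge by edge via \eqref{eq:step6b} and \eqref{eq:step6c}. The paper's own proof is a terse sketch of exactly this argument; yours additionally spells out the single-termination preprocessing and the aggregator $\varepsilon$-detours, which the paper's version silently glosses over.
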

\begin{proof}
If the edge labels traversed by $P$ form a string in $L$, then that string represents a
sequence of valid state transitions in NFA $M$ (respectively the NFA graph $T$) 
from the starting to a terminating state. The tensor product for each $s \in \Sigma$ gives
us a connection between two nodes if the edge corresponds to a valid state transition in
$M$. As the string is in $L$, we know that all of the edge transitions are valid and
therefore that all of the nodes mapped from $P$ to $P'$ are connected. Thus a valid path $P'$ can
be formed in $G'$.
\end{proof}

\begin{claim}\label{thm:min-cut-lower-bound}
Let $v_1$ and $v_n$ be nodes in $G$. Let $q_0$ and $q_t$ be the starting and terminating
states in the NFA. Let the edge capacities of $G'$ be $c'_{lower}$ of \eqref{eq:step8a}.
Then the minimum cut between $(v_1,q_0)$ and $(v_n,q_t)$ in $G'$ is less than or equal to
the minimum cut between $v_1$ and $v_n$ in $G$, taking into consideration only those paths
whose edge labels form strings in $L$.
\end{claim}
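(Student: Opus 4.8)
The plan is to argue via max-flow/min-cut duality by lifting an optimal policy-compliant cut of $G$ to a cut of $G'$ of no greater capacity (equivalently, by projecting flows from $G'$ down to $G$). First I would fix the natural projection $\pi\colon (v,q)\mapsto v$ from $V'=V\times Q$ to $V$ and record how edges behave under it: a symbol edge $((v_1,q_1),(v_2,q_2))\in E'_s$ with $v_1\neq v_2$ projects onto the original edge $(v_1,v_2)\in E_s$, whereas an $\varepsilon$-edge $((v,q_1),(v,q_2))$ stays inside the single node $v$ and projects to nothing. The key combinatorial fact to establish is that, for a fixed original edge $(v_1,v_2)$ with label $s$, its preimages in $G'$ are precisely the edges $((v_1,q_1),(v_2,q_2))$ with $(q_1,q_2)$ an aggregated symbol-$s$ transition, and that the decomposition of Step~3 produces exactly $n_s$ such transitions. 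Hence every original $s$-edge has exactly $n_s$ parallel preimages, each of capacity $c(v_1,v_2)/n_s$ under $c'_{lower}$ of \eqref{eq:step8b}.

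With this in hand, I would take a minimum policy-compliant cut $C\subseteq E$ of $G$ (a set of edges meeting every path from $v_1$ to $v_n$ whose label string lies in $L$) and lift it to the set $C'\subseteq E'$ consisting of all preimages of the edges of $C$. The capacity bookkeeping is then immediate: an edge $e\in C$ with label $s$ contributes at most $n_s$ preimages, each of capacity $c(e)/n_s$, so these together cost at most $n_s\cdot c(e)/n_s=c(e)$, giving $c'_{lower}(C')\le\sum_{e\in C}c(e)$. This is exactly where the $1/n_s$ scaling of $c'_{lower}$ is needed: it compensates for the edge multiplicities introduced by the tensor product, so that an entire bundle of parallel preimages can be severed at the price of a single original edge.

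It then remains to verify that $C'$ is genuinely a cut separating $(v_1,q_0)$ from $(v_n,q_t)$ in $G'$. Here I would invoke Claims~\ref{thm:reverse-correctness} and \ref{thm:forward-correctness}: any path $P'$ from $(v_1,q_0)$ to $(v_n,q_t)$ in $G'$, after contracting its $\varepsilon$-edges, projects to a walk $P$ in $G$ from $v_1$ to $v_n$ whose label string is accepted by $M$ and therefore lies in $L$. Since $C$ meets every such policy-compliant path, $P$ traverses some $e\in C$, and the corresponding symbol edge of $P'$ lies in $C'$; thus $C'$ severs every source-to-sink path of $G'$. Combining the two steps yields $\mathrm{mincut}(G',c'_{lower})\le c'_{lower}(C')\le\sum_{e\in C}c(e)$, which is the policy-compliant min-cut of $G$, as claimed.

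The main obstacle I expect is the preimage counting together with the subtlety of what a policy-compliant cut must sever. Because distinct states collapse under $\pi$, a simple path in $G'$ may project to a \emph{non-simple} walk in $G$; I would therefore phrase the policy-compliant min-cut of $G$ in terms of all label-$L$ walks (equivalently, via the policy-compliant max-flow, for which only weak duality $\mathrm{maxflow}\le\mathrm{mincut}$ is needed), so that non-simplicity of the projected walk does not derail the argument. Confirming that Step~3 yields precisely $n_s$ aggregated symbol transitions per label---no more and no fewer, including the singleton cases where no aggregator is added---is the other point requiring care, since the tightness of the capacity accounting rests on it.
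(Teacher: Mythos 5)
Your proposal is correct, and it reaches the claim by a more explicit route than the paper's own argument. Both proofs hinge on the same quantitative fact: after step 3 the symbol-$s$ transitions consist of exactly $n_s$ aggregated pairs, so an original $s$-labelled edge $e$ has exactly $n_s$ parallel preimages in $G'$, each of capacity $c(e)/n_s$ under $c'_{lower}$, and the whole bundle therefore costs $c(e)$. The paper essentially stops there: it asserts that the minimum cut \emph{between each pair of adjacent nodes} of a path is therefore not inflated, and extends this \emph{by induction} over the edges of a path and \emph{by generalization} to all paths---no cut of $G'$ is ever exhibited. You instead run the standard exhibit-a-cut argument: lift a minimum policy-compliant cut $C$ of $G$ to its preimage set $C'$, verify that $C'$ separates $(v_1,q_0)$ from $(v_n,q_t)$ by projecting any source-to-sink path of $G'$ (with $\varepsilon$-edges contracted) to a label-$L$ walk of $G$, and conclude $\mathrm{mincut}(G',c'_{lower}) \le c'_{lower}(C') \le c(C)$ by weak duality. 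That packaging buys rigor exactly where the paper's induction/generalization step is loose, while using the same correspondence (Claims~\ref{thm:reverse-correctness} and \ref{thm:forward-correctness}) that the paper itself invokes. You also flag the one genuine subtlety that the paper silently ignores: a simple path in $G'$ can project to a non-simple walk in $G$, and since $L$-membership is not preserved when cycles are excised, a cut meeting every policy-compliant \emph{simple path} need not meet every policy-compliant \emph{walk}; the claim is therefore only true under the walk-based (equivalently, max-flow-based) reading of ``paths whose edge labels form strings in $L$,'' which your formulation adopts. Two minor notes: the capacities $c'_{lower}$ are actually defined in \eqref{eq:step8b} (the claim's citation of \eqref{eq:step8a} is a typo that you implicitly corrected), and your ``at most $n_s$ preimages'' can be sharpened to ``exactly $n_s$,'' though the inequality suffices for your bound.
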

\begin{proof}
From \lref{thm:reverse-correctness} and \lref{thm:forward-correctness} we have that any
path in $G'$ corresponds to a valid path in $G$, and vice versa. For each pair of adjacent
nodes $v_k$ and $v_{k+1}$ in the path in $G$ we have the capacity $c((v_k,v_{k+1}))$ for the
edge that connects them. The edge $(v_k,v_{k+1})$ is mapped to $n_s$ edges in $G'$, each
having a capacity of $\frac{c((v_k,v_{k+1}))}{n_s}$, where $n_s$ is the number of disjoint
sets $\Delta_{s,k}$ that $\Delta_s$ is decomposed into in step 3. All of these $n_s$ mapped 
edges in $G'$ therefore have a cumulative capacity of $c((v_k,v_{k+1}))$, the same as
between the pair of nodes in $G$ that they were mapped from. Hence the minimum cut between
each pair of nodes in the path in $G'$ is at most as large as that in $G$, while it may also
be smaller due to the $\Delta_s$ decomposition. By induction, this applies to the path as a
whole, and by generalization to all paths in $G'$. Therefore, the minimum cut will not be
overestimated and the calculated value in $G'$ is a lower bound of the actual min-cut in
$G$.
\end{proof}

\begin{claim}\label{thm:min-cut-upper-bound}
 Let $v_1$ and $v_n$ be nodes in $G$. Let $q_0$ and $q_t$ be the starting and terminating
states in the NFA. Let the edge capacities of $G'$ be $c'_{upper}$ of \eqref{eq:step8b}.
Then the minimum cut between $(v_1,q_0)$ and $(v_n,q_t)$ in $G'$ is greater than or equal to
the minimum cut between $v_1$ and $v_n$ in $G$, taking into consideration only those paths
whose edge labels form strings in $L$.
\end{claim}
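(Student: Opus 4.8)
The plan is to prove the inequality directly in terms of cuts, as the dual counterpart of the flow argument used for \lref{thm:min-cut-lower-bound}. Concretely, I would take a minimum cut $C'$ in $G'$ under the capacities $c'_{upper}$ of \eqref{eq:step8a}, project it down to a set $C$ of edges in $G$, show that $C$ separates every policy-compliant $v_1$--$v_n$ path, and then compare weights to conclude $c'_{upper}(C') \ge c(C) \ge \mathrm{mincut}(G)$.

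The projection is $C := \{(v_1,v_2)\in E \mid ((v_1,q_1),(v_2,q_2))\in C' \text{ for some } q_1,q_2\}$; that is, I keep the underlying $G$-edge of every non-$\varepsilon$ edge cut in $G'$. To see that $C$ is a policy-compliant separator, suppose some policy-compliant path $P$ from $v_1$ to $v_n$ avoided $C$. By \lref{thm:forward-correctness} it lifts to a path $P'$ in $G'$ from $(v_1,q_0)$ to $(v_n,q_t)$, and $P'$ uses only edges mapped from the edges of $P$ together with $\varepsilon$-edges. Since $P$ avoids $C$, none of the mapped edges along $P'$ lie in $C'$, and the $\varepsilon$-edges cannot lie in the finite cut $C'$ because \eqref{eq:step8a} assigns them infinite capacity; hence $P'$ would avoid $C'$, contradicting that $C'$ separates $(v_1,q_0)$ from $(v_n,q_t)$.

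The weight comparison is where the choice of $c'_{upper}$ does the work. Each edge $(v_1,v_2)\in C$ has, by construction, at least one representative $((v_1,q_1),(v_2,q_2))\in C'$, and under \eqref{eq:step8a} every such representative carries the \emph{full} capacity $c(v_1,v_2)$ rather than the divided value $c(v_1,v_2)/n_s$ of the lower bound. Summing over the distinct edges of $C$ therefore gives $c'_{upper}(C') \ge \sum_{(v_1,v_2)\in C} c(v_1,v_2) = c(C)$, and since $C$ is a policy-compliant separator, $c(C)$ is at least the policy-compliant min-cut in $G$. Combining, $\mathrm{mincut}(G') = c'_{upper}(C') \ge \mathrm{mincut}(G)$, as claimed.

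I expect the main obstacle to be the one-to-many edge correspondence, i.e. the bookkeeping that a single $G$-edge may fan out into a whole family of parallel $G'$-edges under the tensor product and aggregation. The argument must track that $C'$ retains at least one representative per severed $G$-edge and that the infinite-capacity $\varepsilon$-edges are automatically excluded from a finite cut; the undivided capacity of \eqref{eq:step8a} is precisely what turns the faithful path correspondence of claims~\ref{thm:reverse-correctness} and \ref{thm:forward-correctness} into an over-estimate rather than an exact value. An equivalent route, mirroring the previous proof more closely, is to lift a maximum policy-compliant flow of $G$ along the paths supplied by \lref{thm:forward-correctness}: because the upper-bound capacities are undivided, overlapping lifts never violate feasibility, so $\mathrm{maxflow}(G')\ge\mathrm{maxflow}(G)$ and the min-cut inequality follows by duality.
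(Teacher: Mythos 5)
Your proof is correct, but your primary argument runs in the opposite direction from the paper's. The paper proves this claim on the primal (flow/path) side, exactly as in the sketch you relegate to your final sentences: every policy-compliant path of $G$ lifts to a path of $G'$ (via \lref{thm:forward-correctness}) whose edges all carry the \emph{undivided} capacity $c((v_k,v_{k+1}))$, so the lifted flow is feasible, the value in $G'$ can only be the same or larger (parallel copies from the $\Delta_s$ decomposition can only help), and the bound follows by max-flow/min-cut duality in the ordinary graph $G'$. Your main argument is instead dual: project a minimum cut $C'$ of $G'$ to the edge set $C$ in $G$, use the infinite capacity of $\varepsilon$-edges to exclude them from $C'$, show via \lref{thm:forward-correctness} that $C$ severs every policy-compliant path, and compare weights edge-by-edge using the undivided capacities. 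This buys you something the paper's proof does not: in the policy-constrained setting, max-flow/min-separator duality can fail in $G$ itself (the minimum weight of a policy-compliant separator may strictly exceed the policy-compliant max flow), and your cut-projection bounds the min-cut of $G'$ from below by the \emph{separator} quantity, which is the larger of the two; the paper's flow-lifting argument only reaches the flow quantity. Your version also makes explicit two points the paper glosses over: that the $\varepsilon$-edges are harmless precisely because a finite cut cannot contain them (you should note, for completeness, that a finite cut exists whenever $v_1 \neq v_n$, namely all non-$\varepsilon$ edges), and that multiple representatives of one $G$-edge in $C'$ only increase the cut weight. One last remark: you cite \eqref{eq:step8a} for $c'_{upper}$, which is where it is actually defined; the claim statement's reference to \eqref{eq:step8b} (and the mirror-image reference in \lref{thm:min-cut-lower-bound}) is a swapped-reference typo in the paper, so your citation is the right one.
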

\begin{proof}
From \lref{thm:reverse-correctness} and \lref{thm:forward-correctness} we have that any
path in $G'$ corresponds to a valid path in $G$, and vice versa. For each pair of adjacent nodes
$v_k$ and $v_{k+1}$ in the path in $G$ we have the capacity $c((v_k,v_{k+1}))$ for the edge
that connects them. The edge $(v_k,v_{k+1})$ is mapped to $n_s$ edges in $G'$ which all have the capacity 
$c((v_k,v_{k+1}))$, where $n_s$ is the number of disjoint
sets $\Delta_{s,k}$ that $\Delta_s$ is decomposed into in step 3.  Hence the
capacity of an edge $e$ in $G$ and of any edge $e'$ in $G'$ that $e$ is mapped to is the same. All
valid paths in $G'$ therefore have at least the same minimum cut as the ones in $G$ from which they are
mapped, while there may be several corresponding parallel paths in $G'$ due to the $\Delta_s$ decomposition. Therefore, the 
minimum cut will not be underestimated and the calculated value in $G'$ is an upper bound of the actual min-cut in $G$.
\end{proof}

\begin{claim}\label{thm:min-cut-exact-value}
The lower bound of claim~\ref{thm:min-cut-lower-bound} and the upper bound of
claim~\ref{thm:min-cut-upper-bound} coincide and the min-cut calculation is exact 
if:
\begin{equation}\label{eq:correctness}
\forall s \in \Sigma: \exists Q'_s,Q''_s \subseteq Q: \Delta_s = Q'_s \times Q''_s
\end{equation}
\end{claim}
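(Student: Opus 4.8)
The plan is to show that under condition~\eqref{eq:correctness} the two capacity functions $c'_{lower}$ and $c'_{upper}$ coincide on every edge of $G'$, so that the lower and upper bounds of claims~\ref{thm:min-cut-lower-bound} and~\ref{thm:min-cut-upper-bound} collapse to a single number; the sandwich those two claims provide then forces that number to equal the true policy-compliant min-cut, which is exactly the assertion of exactness.

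First I would unpack what the hypothesis means for step~3 of the transform. Condition~\eqref{eq:correctness} says that for every symbol $s \in \Sigma$ the transition set $\Delta_s$ already has the form $Q'_s \times Q''_s$ for suitable $Q'_s, Q''_s \subseteq Q$. But step~3 decomposes $\Delta_s$ into several disjoint pieces $\Delta_{s,k}$ \emph{only} when $\Delta_s$ fails to be a single Cartesian product; its first case applies verbatim when the product form holds. Hence under the hypothesis no decomposition is ever triggered, and the decomposition count satisfies $n_s = 1$ for all $s \in \Sigma$. I would note that the degenerate possibilities (an empty $\Delta_s$, or a single transition $\Delta_s$) also have product form, so they are no exception.

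Next I would substitute $n_s = 1$ directly into the capacity definitions. The lower-bound capacity of~\eqref{eq:step8b} assigns $c(v_1,v_2)/n_s = c(v_1,v_2)$ to every non-$\varepsilon$ edge and $\infty$ to each intra-node $\varepsilon$-edge (where $v_1 = v_2$); the upper-bound capacity of~\eqref{eq:step8a} assigns exactly $c(v_1,v_2)$ and $\infty$ in the same two cases. Thus $c'_{lower} = c'_{upper}$ as functions on $E'$, and since the min-cut of $G'$ is determined by its graph structure together with its capacities, the value obtained is literally the same number whether one computes it with $c'_{lower}$ or with $c'_{upper}$.

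Finally I would close by sandwiching: claim~\ref{thm:min-cut-lower-bound} certifies this common value as a lower bound of the policy-compliant min-cut in $G$, while claim~\ref{thm:min-cut-upper-bound} certifies it as an upper bound, so the true min-cut is pinned to that value and the calculation is exact. The result is almost immediate once the connection is seen, so there is no real obstacle; the nearest thing to a delicate point is confirming that the product-form hypothesis genuinely \emph{forces} $n_s = 1$ rather than merely permitting it, which reduces to the observation that step~3's first case is satisfied outright and its decomposing second case is never entered.
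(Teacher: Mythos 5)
Your proposal is correct and follows essentially the same route as the paper's own proof: condition~\eqref{eq:correctness} forces $n_s = 1$ for every $s \in \Sigma$, hence $c'_{lower}(e) = c'_{upper}(e)$ for all $e \in E'$, and the sandwich provided by claims~\ref{thm:min-cut-lower-bound} and~\ref{thm:min-cut-upper-bound} pins the exact value. Your additional observations (degenerate $\Delta_s$ cases, and citing \eqref{eq:step8b} for $c'_{lower}$ and \eqref{eq:step8a} for $c'_{upper}$, which matches the definitions in the transform steps) only tighten the write-up.
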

\begin{proof}
If \eqref{eq:correctness} is true, then $n_s$ is equal to one for every $s \in \Sigma$.
Accordingly, $c'_{lower}(e) = c'_{upper}(e)$ for every $e \in E'$. This means that the lower
and upper bounds of the minimum cut are equal. Since the actual min-cut lies between these
values, it must therefore be equal to the lower and upper bounds.
\end{proof}

\subsection{The Maximal Biclique Generation Problem}

The number of disjoint sets $\Delta_{s,k}$ that $\Delta_s$ is decomposed into in step 3 of the graph transform
determines how far off the lower and upper min-cut bounds are from the actual value in the worst case.
Each $\Delta_{s,k}=Q'_{s,k} \times Q''_{s,k}$ should be expressed as the Cartesian product of two
state node subsets. The tensor product of each $\Delta_{s,k}$ with a link $(v_i, v_j) \in E_s$ is equivalent to a 
biclique in the transformed graph. 
This process can thus be reduced to finding the minimal number of complete bipartite 
graphs---or bicliques---that cover the transformed subgraph corresponding to the initial labelled link $(v_i, v_j)$.
The found bicliques can then be used inversely to determine the $k$ $\Delta_{s,k}$ sets, \ie to determine
the Cartesian product decompositions of the NFA state transitions. This observation helps to estimate the
complexity of the decomposition problem; though a formal analysis is outside the scope of this paper.
The problem of finding maximal bicliques is generally $\mathsf{NP}$-Complete~\cite{Peeters:2003:MEB:958763.958769}, and we refer 
the reader to existing literature for solutions~\cite{DBLP:conf/colognetwente/Kayaaslan10,BinkeleRaible201064}. If looser bounds
are acceptable, a heuristic solution could be used.
We typically only need to decompose small subgraphs corresponding to simple NFAs and this only needs to be performed
once per $\Delta_s$---\ie per symbol--- when transforming the graph. The time costs are significant only 
in cases where the NFA contains very large and complicated transitions. For many common scenarios, 
including 1-to-1, 1-to-N, N-to-1, and M-to-N, the decomposition is trivial (see Fig.~\ref{fig:tensor-product} to~\ref{fig:many-many}).

\subsection{Algorithmic Complexity}

\paragraph*{In Space} Here we consider the space complexity, which depends
on the number of nodes and edges in graph $G$ and the number of states and transitions in NFA $M$.
There are $\vert Q \vert$ states in the NFA. We may need to add $\mathcal{O}(\vert \Delta \vert)$ states
for aggregation (at most one per transition). Therefore, applying the tensor product and taking into account the
aggregation states gives us a total node complexity of:
\begin{equation}\label{eq:node-complexity}
\vert V' \vert = \mathcal{O}(\vert V \vert (\vert Q \vert + \vert \Delta \vert))
\end{equation}
Typically, one edge in $G$ will be mapped to one edge in $G'$, plus some
$\varepsilon$-edges. In the worst case, we may need $\mathcal{O}(\vert \Delta \vert)$
edges to map between two nodes (if many need to be decomposed into disjoint subsets in step 3). 
We may also need $\mathcal{O}(\vert \Delta \vert)$ $\varepsilon$-edges for each node, 
yielding a total edge complexity of:
\begin{equation}\label{eq:edge-complexity}
\vert E' \vert = \mathcal{O}(\vert \Delta \vert(\vert V \vert + \vert E \vert))
\end{equation}

\paragraph*{In Time} 
To obtain $G_s$, executing \eqref{eq:step2} requires
$\Theta(\vert E \vert)$ steps (the nodes are maintained), while obtaining
the $T_s$ requires executing \eqref{eq:step4}, requiring
$\Theta(\vert \Delta \vert)$ steps. Step 3 will be executed in the worst
case $n$ times, where $n$ is the number of disjoint transition
sets that $\Delta_s$ may be broken into, which cannot be larger than $\vert \Delta \vert$.
Additionally, it demands $t_{dec}$ time, which is the amount of time required to 
actually decompose state transitions in $M$. The latter depends on $\Delta$, and may
be of nondeterministic polynomial complexity. However, $t_{dec}$ is generally negligible in practice 
for many common scenarios (\ie $\vert V \vert >> \vert Q \vert$ and $\vert E \vert >> \vert \Delta \vert$).
3a and 3c require only constant time, since they add a single object to a set.
3b requires $\mathcal{O}(\vert Q \vert)$ steps, giving us
$\mathcal{O}(\vert Q \vert \vert \Delta \vert)$ total time for step 3. For
step 4, we have $\mathcal{O}(\vert V \vert \vert Q \vert)$ for executing \eqref{eq:step6a}, 
$\mathcal{O}(\vert E \vert \vert \Delta \vert)$ for executing  \eqref{eq:step6b} and
$\mathcal{O}(\vert V \vert \vert \Delta \vert)$ for executing \eqref{eq:step6c}.
Finally, \eqref{eq:step7} requires $\mathcal{O}((\vert V \vert + \vert E \vert) \vert \Delta
\vert)$ steps. Thus, the total time complexity is:
\begin{equation}\label{eq:time-complexity}
t = \mathcal{O}(\vert V \vert \vert Q \vert + \vert \Delta \vert(
\vert V \vert + \vert E \vert + \vert Q \vert)) + t_{dec}
\end{equation}
In practice, the total running time is dominated by the min-cut calculation on the transformed graph---\eg via Ford-Fulkerson 
or Edmonds-Karp---rather than by the graph transformation process itself. The spatial complexity of the transformed graph in terms of the sizes
$\vert V' \vert$ and $\vert E' \vert$ is the most important factor for the time that the min-cut calculation requires. 

\section{Applications}
\label{sec:eval}
In this section we showcase some of the applications of the graph transform algorithm.
Note that our main contribution is the algorithm; here, we simply
wish to demonstrate its applicability to a selection of real-world problems, rather 
than provide a full and rigorous analysis.

\subsection{Paths between tier one and tier two providers}

\paragraph*{Setup} 
We begin by calculating the number of edge-disjoint paths between some of the largest ISPs
in the world, using the CAIDA AS relationships data~\cite{caida-as-rel}. 
We performed this calculation both for the total
path diversity assuming no policies, \ie \emph{arbitrary} paths, and also for the diversity
of valley-free paths using the regular expression described in
Section~\ref{sec:use cases}. We note that this
regular expression can be mapped to an NFA that satisfies condition~(\ref{eq:correctness});
therefore, the obtained values in this case are exact since the min-cut bounds coincide. 

\paragraph*{Results} 
The results are shown in \tref{tab:as-path-diversity}. The ISPs are: NTT (AS2914),
Deutsche Telekom (AS3320), AT\&T (AS7018), Embratel (AS4230), BT (AS5400) and Comcast (AS7922).
First of all, we notice that unconstrained routing could offer around
two orders of magnitude greater path diversity than the valley-free case. We observe 
substantial path diversity present between tier ones and tier twos, especially considering that
these are AS-level paths, corresponding to multiple links at the router level.
We note that the number of valley-free paths between pairs of tier one ISPs is always one
due to the lack of an upstream ISP as well as the prohibition of traversing multiple peering
links; thus we see only their direct \emph{p2p} interconnections. The sizeable customer cone
of the tier ones combined with this lack of diversity means that any depeering has the
potential to cause major disruption for the direct tier one customers, which is known to
have occurred already~\cite{underwoodwrestling, brown2008peering}. For large tier twos, due to the rich 
peer-to-peer interconnectivity resulting in large path diversity, such depeerings seem not
to be harmful. 
In this context we remark that even for very large ISPs, the limiting factor for
path diversity is often the number of peering and provider connections that the ISP itself
maintains, rather than the Internet topology at large; this points to a densely connected 
Internet. An ISP which wishes to improve its connectivity can therefore either establish 
a business relationship with another upper tier ISP, or expand its peering. Many evidently choose 
the latter~\cite{Chatzis:2013:MIM:2541468.2541473},
with the propagation of public open Internet Exchange Points (IXPs)~\cite{euroix-ixp-list} lowering the
barriers to entry for establishing new peering relationships.

\begin{table}
\centering
\scriptsize
\tabcolsep2pt
\begin{tabular}{lcccccccccccc}
\toprule
 & \multicolumn{6}{c}{Arbitrary paths} & \multicolumn{6}{c}{Valley-free paths}\\
 \cmidrule{2-7} \cmidrule{8-13}
AS & \textbf{2914} & \textbf{3320} & \textbf{7018} & 4230 & 5400 & 7922 & \textbf{2914} & \textbf{3320} & \textbf{7018} & 4230 & 5400 & 7922\\\midrule
\textbf{2914} & -    & 496 & 1012 & 190 & 145 & 145 & - & 1  & 1 & 9  & 5  & 2\\
\textbf{3320} & 496  & -   & 496  & 190 & 145 & 145 & 1 & -  & 1 & 10 & 6  & 3\\
\textbf{7018} & 1012 & 496 & -    & 190 & 145 & 145 & 1 & 1  & - & 9  & 5  & 3\\
        4230  & 190  & 190 & 190  & -   & 145 & 145 & 9 & 10 & 9 & -  & 10 & 8\\
        5400  & 145  & 145 & 145  & 145 & -   & 145 & 5 & 6  & 5 & 10 & -  & 6\\
        7922  & 145  & 145 & 145  & 145 & 145 & -   & 2 & 3  & 3 & 8  & 6  & -\\
\bottomrule
\end{tabular}
\caption{Path diversity between large tier one and tier two ISPs. Tier one ISPs
are marked in bold face for clarity.}
\label{tab:as-path-diversity}
\end{table}

\subsection{Global path diversity}
\paragraph*{Setup} To extend the previous use case, we look at the number of edge-disjoint
paths between arbitrary pairs of ASes. We again use the CAIDA AS relationships
data~\cite{caida-as-rel} and use the valley-free model. Due to the large number of
ASes present, it is not feasible to calculate the pairwise path diversity for every pair
of ASes. Instead we sample the ASes, selecting each AS with a probability proportional to
the number of addresses it announces, using the CAIDA RouteViews AS-to-prefix
data~\cite{caida-routeviews}. The AS relationships dataset is obtained from BGP routes
announced at various vantage points within the Internet; the downside to this is that many
peering links are not visible from these vantage points. In reality, the number of peering links in the
Internet may be much larger than reported in this
dataset~\cite{Ager:2012:ALE:2377677.2377714}. To address this deficiency, we augment the AS relationships 
with data available from PeeringDB, which was evaluated and validated
by Lodhi \etal~\cite{peeringdb-routing-ecosystem}.
We specifically add links between IXP members with an \emph{Open} policy, indicating a general
willingness to peer with other IXP members without any preconditions (such as balanced
traffic). The majority of IXP members have an Open peering policy. In addition to the valley-free
scenario, we also evaluate a more liberal policy, where a path may traverse more than one peering
link between the uphill and downhill transitions. As we will see later, this can increase resiliency substantially.
We note that the multiple peering link case corresponds also---like the classic valley-free case---to an NFA 
that satisfies condition~(\ref{eq:correctness}),
thus yielding exact values for the path diversity calculations.

\paragraph*{Results}

\fref{fig:path-diversity-global} shows the CCDF of path diversity. We evaluated the path
diversity between 10,000 pairs of randomly selected ASes, with the selection weighted by the
number of announced addresses as already described beforehand. 
We observe that \first the added links make little difference
to the valley-free scenario, \second the multiple peering links scenario has a considerably
larger path diversity and profits a little more from extra peering links.
Note that although the two valley-free scenarios appear to be identical, there are small differences
between them hidden by the log scale, which we will see later. 

\begin{figure}
\centering
\includegraphics[width=0.75\columnwidth,clip=true,trim=0 0 0 0]{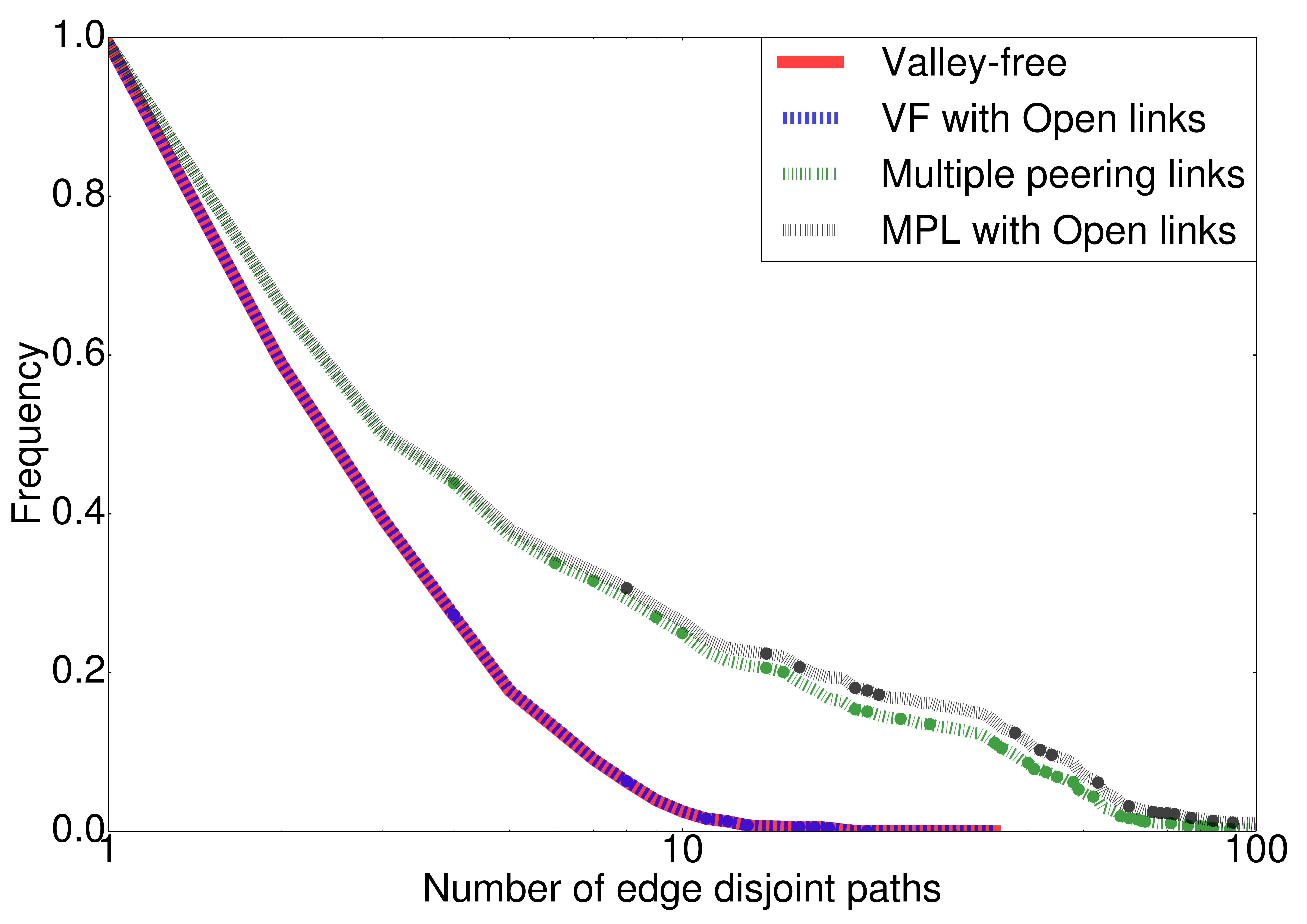} 
\caption{CCDF of path diversity, by constraint type and added link policy. VF: Valley-Free, MPL: Multiple Peering Links.}
\label{fig:path-diversity-global}
\end{figure}

\subsection{Effects of peering policy openness on path diversity}
\paragraph*{Setup}
Typically, IXP members have \emph{Open}, \emph{Selective} and \emph{Restrictive} peering
policies. The majority of IXP members have an Open policy, which, as stated above, implies
the willingness to peer with other IXP members without any preconditions. A Selective policy
generally implies certain preconditions, such as balanced traffic (\eg a maximal 1:2 ratio)
or geographically diverse peering points (\eg one on every continent), and is typical for
major ISPs. A Restrictive peering policy generally indicates the intent to peer only on a
case-by-case basis, which is typical for the largest ISPs, especially tier ones.
We would like to examine the effect of adding peering links from a given policy class
in terms of the AS-level path diversity. While introducing peering links between all
pairs of IXP members with an Open policy is a reasonable approach to augment the graph,
the other cases may be less realistic. One approach is to add links between all pairs of IXP
members sharing the same peering policy. ISPs typically peer with similarly sized ISPs, 
and peering policy may be seen as a crude indicator of ISP size (with larger ISPs
being more restrictive).


\paragraph*{Results}

\tref{tab:path-diversity-by-policy} shows the results of the path diversity calculation
between 10,000 randomly selected pairs of ASes, with the selection weighted by the number of
announced addresses. We observe that especially Open links increase the mean path diversity
by 10.9\%. This is not surprising given that:
\first these are the most numerous, and \second the most likely to be missed by BGP route
collectors---relationships between the largest ISPs are most likely to be faithfully
represented in the AS relationship dataset. Restrictive and Selective links---between IXP members
with the same respective policy---have smaller impact on path diversity.

\begin{table}
\begin{minipage}{0.32\columnwidth}

\centering
\scriptsize
\tabcolsep2pt
\begin{tabular}{l*{2}r}
\toprule
 & \multicolumn{2}{c}{Path diversity} \\
\cmidrule{2-3}
Links added		& $\mu$	& $\sigma$	\\
\midrule
None			& 2.836	& 2.620		\\
Restrictive		& 2.851	& 2.633		\\
Selective		& 3.037	& 2.953		\\
Open			& 3.146 & 3.136		\\
\bottomrule
\end{tabular}
\subcaption{Policy}
\label{tab:path-diversity-by-policy}
\end{minipage}
\begin{minipage}{0.72\columnwidth}

\centering
\scriptsize
\tabcolsep2pt
\begin{tabular}{l*{4}r}
\toprule
 & \multicolumn{2}{c}{Mean path diversity} \\
\cmidrule{2-3}
Scenario				& Before	& After	& Difference\\
\midrule
Valley-free  			& 1.100 	& 1.023	& 7.03\perc\\
+ Open links 			& 1.101		& 1.023	& 7.02\perc\\
Multiple peering links 	& 1.267		& 1.267	& 0.02\perc\\
+ Open links 			& 1.499		& 1.499	& 0.04\perc\\
\bottomrule
\end{tabular}
\subcaption{Depeering}
\label{tab:path-diversity-depeering}
\end{minipage}
\caption{The effect on path diversity of \first augmenting the relationship graph with IXP
membership data (values are not cumulative), by policy and \second depeering two tier one
ISPs, by constraint type and whether or not the graph is augmented with IXP membership
data. The values are rounded off.}
\label{tab:path-diversity-policy-depeering}
\end{table}

\subsection{Effects of depeering on path diversity}
\paragraph*{Setup}
We have already noted the effect of a \emph{depeering} event, where one ISP chooses to
cease sharing its routes with one of its peers. We would like to investigate the effect of
such an event on path diversity. We synthetically create a depeering event by removing the
peering between two tier ones, in this case Deutsche Telekom (AS3320) and Level3 (AS1, AS3356
and AS3549). We then evaluate the path diversity between the \emph{exclusive} customer cones
of the ISPs, \ie the sets of customers which have only the one but not the other as 
(transit) upstream providers, and vice versa.
We examine these customer cones down to a depth of three. As before, we
also examine the effect of adding peering links. Now we also consider the possibility of a
model more liberal than valley-free, allowing to traverse multiple peering links, 
as examined also by Hu \etal~\cite{5462219} and Kotronis \etal~\cite{kotronis2429control}. 
We evaluate the effect of allowing this while still maintaining the overall
valley-free model (\ie not permitting a provider to transit via its customers).

\paragraph*{Results} 
\tref{tab:path-diversity-depeering} shows the results of the depeering for the different
scenarios involving 10,000 pairs of ASes,
selected pairwise from each of the tier one AS's exclusive customer cones. For the
valley-free model, we observe a negligible increase in path diversity by adding the extra links from PeeringDB,  
both before and after the depeering. The depeering event causes an approximately $7 \perc$ decrease in path diversity.
Conversely, if we allow multiple peering links, the addition of the extra peering links boosts path diversity by
over $18 \perc$. Here we do not observe a significant drop in path diversity due to the depeering.
Allowing multiple peering links can therefore potentially increase resilience in the face of a depeering.

\section{Related Work}
\label{sec:rel-work}

\paragraph*{Tensor Products} Soul{\'e} \etal \cite{Soule:2013:MNM:2535771.2535792} 
use a similar process to the one presented in this paper in the context of network
management. Their goal is to enforce bandwidth allocation subject to path constraints 
represented by regular expressions and consequently by NFAs. They use tensor products
in a different context than our approach, since we focus on path diversity and the implications
arising in its preservation across transformations. In particular, we further propose the 
addition of aggregator nodes, acting as inhibitors to min-cut inflation. 

\paragraph*{Network Resilience} Previous research on
resilient networks~\cite{cetinkaya:2013:MRA, cheng:2013:PGD} considers the network as 
a set of nodes and links, annotated with geographical properties.
Consequently, researchers can calculate min-cuts, path distances or shared fate 
link groups that are affected in a correlated fashion during a disaster. On the other hand, 
networks are run as policy-compliant administrative domains~\cite{caesar2005bgp};
the choice of paths that traffic can traverse is constrained. The view of the
network as a geographical map cannot capture this behavior. Thus, we argue that network resilience
should be also estimated under a policy-compliance framework. Our approach enables exactly that, allowing
to run vanilla min-cut calculation algorithms like Ford-Fulkerson on the transformed graph, with tight min-cut approximations under certain conditions. 

\paragraph*{Min-cuts with Policies} Sobrinho \etal~\cite{Sobrinho:2012:TCD:2369156.2369160}
describe a model for understanding the connectivity provided by route-vector protocols in the
face of routing policies.
Erlebach \etal~\cite{Erlebach:2004:CDP:2149938.2149945,erlebach2009connectivity}
study valid s-t-paths and s-t-cuts in the valley-free model and prove the NP-hardness of the 
vertex-disjoint min-cut problem. On the other hand, they prove that the edge-disjoint version can be solved
in polynomial time for valley-free policies; we have verified this statement in our framework. 
Both works focus on specific aspects of the general problem (route-vector protocols and valley-free policies respectively), 
while we are delving into a more general methodology for min-cut estimations. 
Sobrinho \etal~\cite{Sobrinho:2012:TCD:2369156.2369160} examine the dynamics of
a routing protocol with their work, while we concentrate on a general method to understand the 
effect of stable network policies on path diversity, ignoring for example the dynamics of routing 
convergence. Teixeira \etal~\cite{Teixeira:2003:CMP:781027.781069}
study vertex- and edge-disjoint paths in undirected Internet topology models, 
but without taking routing policies into account.

\section{Conclusion}
\label{sec:conclusion}

Path diversity and bisection bandwidth are useful
metrics to describe how resilient or rich a network is. Network
policies, imposed by network administrators and applied via routing
protocols or network configuration, can constrain the natural
path diversity of a network graph. With this work, we described and 
proved the correctness of 
a generic methodology for min-cut computations in arbitrary graphs,
assuming policies that can be formulated using regular expressions.
Our approach can be applied in a variety of scenarios, some of which
are briefly showcased in this paper. These include the investigation
of Internet topology and alternative policy models in the Internet, 
effectively studying Internet-wide resilience and the effects of
inter-AS connectivity on path diversity.
We see further potential for our approach in the analysis of MPTCP flow
path availability in data center networks, 
and path selection optimization in multipath flow 
routing applications. Achieving tighter bounds for the min-cut is another topic of interest.

\section{Acknowledgements}

We would like to thank Dr. Stefan Schmid (TU Berlin) for his valuable advice,
as well as the anonymous reviewers for their feedback.
This work has received funding from the European Research Council Grant Agreement n. 338402.

\bibliographystyle{ieeetr}
{
\bibliography{graph-transformation}

\begin{thebibliography}{10}

\bibitem{arlinghaus2001graph}
S.~L. Arlinghaus, W.~C. Arlinghaus, and F.~Harary, {\em {Graph Theory and
  Geography: an Interactive View}}.
\newblock Wiley New York, 2001.

\bibitem{Xu:2006:MMI:1159913.1159934}
W.~Xu and J.~Rexford, ``{MIRO: Multi-path Interdomain Routing},'' in {\em
  Proceedings of ACM SIGCOMM}, 2006.

\bibitem{caesar2005bgp}
M.~Caesar and J.~Rexford, ``{BGP Routing Policies in ISP Networks},'' {\em
  Network, IEEE}, vol.~19, no.~6, pp.~5--11, 2005.

\bibitem{gao2001stable}
L.~Gao and J.~Rexford, ``{Stable Internet Routing Without Global
  Coordination},'' {\em IEEE/ACM TON}, vol.~9, no.~6, pp.~681--692, 2001.

\bibitem{bruggemann1993regular}
A.~Br{\"u}ggemann-Klein, ``{Regular Expressions into Finite Automata},'' {\em
  Theoretical Computer Science}, vol.~120, no.~2, pp.~197--213, 1993.

\bibitem{ford1962flows}
L.~Ford and D.~R. Fulkerson, {\em {Flows in Networks}}, vol.~1962.
\newblock Princeton University Press, 1962.

\bibitem{Soule:2013:MNM:2535771.2535792}
R.~Soul{\'e}, S.~Basu, R.~Kleinberg, E.~G. Sirer, and N.~Foster, ``{Managing
  the Network with Merlin},'' in {\em Proceedings of ACM HotNets}, 2013.

\bibitem{caida-as-rel}
``{The CAIDA AS Relationships Dataset}.''
  \url{http://www.caida.org/data/as-relationships/}.
\newblock Dataset used: 2013-11-01.

\bibitem{4032726}
H.~Han, S.~Shakkottai, C.~Hollot, R.~Srikant, and D.~Towsley, ``{Multi-Path
  TCP: A Joint Congestion Control and Routing Scheme to Exploit Path Diversity
  in the Internet},'' {\em IEEE/ACM TON}, vol.~14, no.~6, pp.~1260--1271, 2006.

\bibitem{rfc6824}
A.~Ford, C.~Raiciu, M.~Handley, and O.~Bonaventure, ``{TCP Extensions for
  Multipath Operation with Multiple Addresses}.'' RFC 6824 (Experimental), Jan.
  2013.

\bibitem{Kang:2013:CA:2497621.2498106}
M.~S. Kang, S.~B. Lee, and V.~D. Gligor, ``{The Crossfire Attack},'' in {\em
  Proceedings of the 2013 IEEE Symposium on Security and Privacy}, 2013.

\bibitem{prince2013ddos}
M.~Prince, ``{The DDoS That Knocked Spamhaus Offline (And How We Mitigated
  It)}.'' {(Mar. 20, 2013, 06:26 PM), CloudFlare Blog}, 2013.

\bibitem{chow2007distributed}
S.~T. Chow, D.~Wiemer, and J.-M. Robert, ``{Distributed Defence Against DDoS
  Attacks},'' July~5 2007.
\newblock US Patent App. 11/822,341.

\bibitem{Peeters:2003:MEB:958763.958769}
R.~Peeters, ``{The Maximum Edge Biclique Problem is NP-complete},'' {\em
  Discrete Appl. Math.}, vol.~131, no.~3, pp.~651--654, 2003.

\bibitem{DBLP:conf/colognetwente/Kayaaslan10}
E.~Kayaaslan, ``{On Enumerating All Maximal Bicliques of Bipartite Graphs},''
  in {\em Proceedings of CTW}, 2010.

\bibitem{BinkeleRaible201064}
D.~Binkele-Raible, H.~Fernau, S.~Gaspers, and M.~Liedloff, ``{Exact
  Exponential-time Algorithms for Finding Bicliques},'' {\em Information
  Processing Letters}, vol.~111, no.~2, pp.~64 -- 67, 2010.

\bibitem{underwoodwrestling}
T.~Underwood, ``{Wrestling With the Zombie: Sprint Depeers Cogent, Internet
  Partitioned}.'' {(Oct. 31, 2008, 10:07 AM), Renesys Blog}, 2008.

\bibitem{brown2008peering}
M.~A. Brown, A.~Popescu, and E.~Zmijewski, ``{Peering Wars: Lessons Learned
  From the Cogent-Telia De-peering}.'' {Presentation, Renesys Corp, New York},
  2008.

\bibitem{Chatzis:2013:MIM:2541468.2541473}
N.~Chatzis, G.~Smaragdakis, A.~Feldmann, and W.~Willinger, ``{There is More to
  IXPs Than Meets the Eye},'' {\em SIGCOMM CCR}, vol.~43, no.~5, pp.~19--28,
  2013.

\bibitem{euroix-ixp-list}
{EuroIX: European Internet Exchange Association}, ``{List of Known IXPS Around
  the Globe}.'' \url{https://www.euro-ix.net/resources-list-of-ixps}, 2014.

\bibitem{caida-routeviews}
``{Routeviews Prefix to AS mappings Dataset for IPv4 and IPv6}.''
  \url{http://www.caida.org/data/routing/routeviews-prefix2as.xml}.
\newblock Dataset used: 2014-01-11, 12:00 UTC.

\bibitem{Ager:2012:ALE:2377677.2377714}
B.~Ager, N.~Chatzis, A.~Feldmann, N.~Sarrar, S.~Uhlig, and W.~Willinger,
  ``{Anatomy of a Large European IXP},'' {\em SIGCOMM CCR}, vol.~42, no.~4,
  pp.~163--174, 2012.

\bibitem{peeringdb-routing-ecosystem}
A.~Lodhi, N.~Larson, A.~Dhamdhere, C.~Dovrolis, {\em et~al.}, ``{Using
  peeringDB to Understand the Peering Ecosystem},'' {\em ACM SIGCOMM CCR},
  vol.~44, no.~2, pp.~20--27, 2014.

\bibitem{5462219}
C.~Hu, K.~Chen, Y.~Chen, and B.~Liu, ``{Evaluating Potential Routing Diversity
  for Internet Failure Recovery},'' in {\em Proceedings of IEEE INFOCOM}, 2010.

\bibitem{kotronis2429control}
V.~Kotronis, X.~Dimitropoulos, R.~Kl{\"o}ti, B.~Ager, P.~Georgopoulos, and
  S.~Schmid, ``{Control Exchange Points: Providing QoS-enabled End-to-End
  Services via SDN-based Inter-domain Routing Orchestration},'' in {\em
  Proceedings of the Research Track of ONS 2014}, 2014.

\bibitem{cetinkaya:2013:MRA}
E.~K. \c{C}etinkaya, M.~J. Alenazi, A.~M. Peck, J.~P. Rohrer, and J.~P.~G.
  Sterbenz, ``{Multilevel Resilience Analysis of Transportation and
  Communication Networks},'' {\em Springer Telecommunication Systems Journal},
  2013.

\bibitem{cheng:2013:PGD}
Y.~Cheng, J.~Li, and J.~P.~G. Sterbenz, ``{Path Geo-diversification: Design and
  Analysis},'' in {\em Proceedings of IEEE/IFIP RNDM Workshop}, 2013.

\bibitem{Sobrinho:2012:TCD:2369156.2369160}
J.~a.~L. Sobrinho and T.~Quelhas, ``{A Theory for the Connectivity Discovered
  by Routing Protocols},'' {\em IEEE/ACM Trans. Netw.}, vol.~20, no.~3,
  pp.~677--689, 2012.

\bibitem{Erlebach:2004:CDP:2149938.2149945}
T.~Erlebach, A.~Hall, A.~Panconesi, and D.~Vukadinovi\'{c}, ``{Cuts and
  Disjoint Paths in the Valley-free Path Model of Internet BGP Routing},'' in
  {\em Proceedings of CAAN}, 2005.

\bibitem{erlebach2009connectivity}
T.~Erlebach, L.~S. Moonen, F.~C. Spieksma, and D.~Vukadinovic, ``{Connectivity
  Measures for Internet Topologies on the Level of Autonomous Systems},'' {\em
  Operations research}, vol.~57, no.~4, pp.~1006--1025, 2009.

\bibitem{Teixeira:2003:CMP:781027.781069}
R.~Teixeira, K.~Marzullo, S.~Savage, and G.~M. Voelker, ``{Characterizing and
  Measuring Path Diversity of Internet Topologies},'' in {\em Proceedings of
  ACM SIGMETRICS}, 2003.

\end{thebibliography}
}

\end{document}